\begin{document}



\newcommand{\authornote}[2]{{\color{red} \bf { [[ #1:} #2 {\bf ]]}}}
\newcommand{\hnote}[1]{\authornote{Hai}{#1}}
\newcommand{\knote}[1]{\authornote{Kobbi}{#1}}

\newcommand{\remove}[1]{}

\newenvironment{smallbmatrix}{\left[\begin{smallmatrix}}{\end{smallmatrix}\right]}   
\newenvironment{smallpmatrix}{\left(\begin{smallmatrix}}{\end{smallmatrix}\right)}   

\def\D{{\cal D}}
\def\M{{\cal M}}
\def\R{{\cal R}}


\theoremstyle{definition} 		\newtheorem{mydefinition}{Definition}[section]
\theoremstyle{plain}	 				\newtheorem{mylemma}[mydefinition]{Lemma}
\theoremstyle{plain}					\newtheorem{myaxiom}[mydefinition]{Axiom}
\theoremstyle{plain}					\newtheorem{mytheorem}[mydefinition]{Theorem}
\theoremstyle{plain}					\newtheorem{myproposition}[mydefinition]{Proposition}
\theoremstyle{plain}					\newtheorem{myclaim}[mydefinition]{Claim}
\theoremstyle{plain}					\newtheorem{myobservation}[mydefinition]{Observation}
\theoremstyle{remark}					\newtheorem{mynote}[mydefinition]{Note}
\theoremstyle{remark}					\newtheorem{myexample}[mydefinition]{Example}

\begin{titlepage}

\author{Hai Brenner\thanks{Dept. of Mathematics,
Ben-Gurion University. {\tt haib@bgu.ac.il}.} 
\and Kobbi Nissim\thanks{Microsoft AI, Israel, and Dept. of Computer Science,
Ben-Gurion University. {\tt kobbi@cs.bgu.ac.il}.} \\
}

\title{Impossibility of Differentially Private Universally Optimal Mechanisms\thanks{Research partly supported by the Israel Science Foundation (grant No.\ 860/06), and by the Frankel Center for Computer Science, Dept.\ of Computer Science, Ben-Gurion University. }}

\date{\today}

\maketitle

\begin{abstract}
The notion of {\em a universally utility-maximizing privacy mechanism} was recently introduced by Ghosh, Roughgarden, and Sundararajan~[STOC 2009]. These are mechanisms that guarantee optimal utility to a large class of information consumers, {\em simultaneously}, while preserving {\em Differential Privacy} [Dwork, McSherry, Nissim, and Smith, TCC 2006]. Ghosh et al.\ have demonstrated, quite surprisingly, a case where such a universally-optimal differentially-private mechanisms exists, when the information consumers are Bayesian. This result was recently extended by Gupte and Sundararajan~[PODS 2010] to risk-averse consumers.

Both positive results deal with mechanisms (approximately) computing a {\em single count query} (i.e.,  the number of individuals satisfying a specific property in a given population), and the starting point of our work is a trial at extending these results to similar settings, such as sum queries with non-binary individual values, histograms, and two (or more) count queries. We show, however, that universally-optimal mechanisms do not exist for all these queries, both for Bayesian and risk-averse consumers.

For the Bayesian case, we go further, and give a characterization of those functions that admit universally-optimal mechanisms, showing that a universally-optimal mechanism exists, essentially, only for a (single) count query. At the heart of our proof is a representation of a query function $f$ by its {\em privacy constraint graph} $G_f$ whose edges correspond to values resulting by applying $f$ to neighboring databases.
\end{abstract}

\thispagestyle{empty}

\end{titlepage}

\thispagestyle{empty} \tableofcontents\newpage\setcounter{page}{1}

\section{Introduction}
{\em Differential Privacy}~\cite{DMNS06} is a rigorous notion of privacy that allows learning global (`holistic') information about a collection of individuals while preserving each individual's information private. The literature of differential privacy is now rich in techniques for constructing differentially privacy mechanisms, including some generic techniques such as the addition of Laplace noise with magnitude calibrated to global sensitivity~\cite{DMNS06}, addition of instance based noise calibrated to smooth sensitivity~\cite{NRS07}, and the exponential mechanism~\cite{MT07}. These and other techniques allow performing a wide scope of analyses in a differentially private manner, including conducting surveys over sensitive information, computing statistics, datamining, and sanitization. The reader is referred to~\cite{Dwork09} for a recent survey.

An immediate consequence of differential privacy is that (unless computing a constant function) a mechanism cannot compute a deterministic function. In other words, a differentially private version of an analysis would be a randomized approximation to the analysis, and furthermore, it would generally be possible to choose from a host of implementations for a task (e.g., the three generic techniques mentioned about may result with different mechanisms). Naturally, the designer of the analysis should choose one that is {\em useful}. Usefulness, however, depends on how the outcome of the analysis would be used, i.e., on the preferences of its {\em consumer}, that we henceforth refer to as an {\em information consumer}. 
Such a trade-off between uncertainty and utility, while taking consumer's preferences into account, is the subject of rational-choice theory and decision theory, as noted in \cite{GRS09, GS10}.

We discuss the two models of utility which were previously discussed in \cite{GRS09, GS10}. In both, the information consumer has \emph{side information} (her own world-view or previous knowledge), and a \emph{loss-function} which quantifies the consumer's preferences and the quality of the solution for her problem. Intuitively, it describes how bad is a deviation from the exact answer for the consumer, a measure of her intolerance towards the inaccuracy  imposed by differentially private mechanisms. Finally, the models assume that the consumers are \emph{rational} - they combine the structure of the mechanism, their side information and their personal loss-function (preferences) with the goal of minimizing their loss, or, equivalently maximizing their utility. The two models differ in the way side information is formulated and respectively how utility function is defined. 
Subject to the requirements of differential privacy, one usually has a choice from a collection of implementations. As discussed in decision-theory and assuming rational information consumers, each consumer will choose a mechanism which maximizes her utility. This is an \emph{optimal} mechanism for this consumer.

Information consumers' accuracy requirements vary: for some consumers only an exact answer would be of value, whereas others may aim at minimizing the estimate bias ($\ell_1$ error), or its variance ($\ell_2$ error), and, clearly, many other criteria exist. It seems that a discussion of the utility of differentially private mechanisms should take this rich variety into account. The recent work of Ghosh, Roughgarden, and Sundararajan~\cite{GRS09} has put forward a serious attempt at doing exactly that with respect to (oblivious) Bayesian information consumers. In this utility model, the consumer's side information is described as an a priori distribution on the exact result of the analysis. The recent work of Gupte and Sundararajan~\cite{GS10}  considers a related model where the information consumers are {\em risk-averse}. Here, the information consumer's knowledge is a set of possible values the exact analysis can take, and an optimal mechanism minimizes the consumer's worst-case expected loss.

Composition theorems for differential privacy only guarantee that the degradation in privacy is not more than exponential in the number invocations. Hence, while different consumers may exhibit different optimal mechanisms, a very important goal is to avoid invoking that  multiplicity of mechanisms. This degradation is part of the motivation for the work on {\em sanitization} where a family of queries are answered at once~\cite{DwNi04,BLR08,FFKN09,DMRRV09}, the work on {\em privacy under continual observation}~\cite{DNPR10}, and the construction of the {\em Median Mechanism}~\cite{RR10}. A surprising result of Ghosh, Roughdarden, and Sundararajan~\cite{GRS09} is that invoking a multiplicity of optimal mechanisms may not be necessary. They consider a database that is a collection of Binary inputs (e.g., pertaining to having some disease) and Bayesian information consumers that wish to count the number of {\em one} entries in the database (equivalently, compute the sum of the entries). They show the existence of a single mechanism that enables optimality for {\em all} Bayesian information consumers (the mechanism needs to be invoked only once). The mechanism itself is not optimal for all Bayesian information consumers, however, each consumer can perform a deterministic remapping on the outcome of the common mechanism, where the remapping is chosen according to her notion of utility, and locally output a result that is effectively according to one of her optimal mechanism.  Such a common mechanism is referred to as {\em universally optimal}. An analogous result for risk-averse information consumers was shown in~\cite{GS10}. 

Are these results of~\cite{GRS09} and~\cite{GS10} that deal with the simple case of a single count query ``accidental'', or can they be extended to other queries? to multiple queries? One would anticipate that universally-optimal mechanisms should exist (at least) for those queries that are closely related to counting, such as sum queries where the inputs are non-binary, histograms, and bundles of two or more count queries.

\subsection{Our Results and Directions for Future Progress}

In contrast with the anticipation expressed in the previous paragraph, we show that settings in which universally optimal mechanisms exist are extremely rare, and, in particular, in both the setting of Bayesian and of risk-averse information consumers, universally optimal mechanisms do not exist even for sum queries where the inputs are non-binary, histograms, and bundles of two or more count queries. 

Moreover, in the case of Bayesian information consumers, we give a characterization of those functions of the data that admit universally optimal mechanisms. The characterization makes use of a combinatorial structure of the query function $f:\D^n\rightarrow \R_f$, where $\D$ is the domain of the database records and $\R_f$ is the output space of the query function. We define this combinatorial structure of the query $G_f$ and call it a {\em privacy constraint graph}. The vertices of $G_f$ correspond to values in $\R_f$, and edges correspond to pairs of values resulting by applying $f$ to neighboring databases. (This graph was examined in some proofs in \cite{KL10} as well). We show:

\medskip\noindent{\bf Theorem~\ref{thm:acyclic} {\rm (Informal)}.} {\em If $G_f$ contains a cycle then no universally optimal mechanism exists for $f$. }

\medskip\noindent{\bf Theorem~\ref{thm:delta3} {\rm (Informal)}.} {\em If $G_f$ is a tree that contains a vertex of degree 3 or more, then no universally optimal mechanism exists for $f$ for better values of the privacy parameter.}

\medskip

Facing the impossibility of universal optimality, an alternative may be found in an approximate notion, which enables (approximate) optimality to (approximately) all of the information  consumers. A good notion of approximate optimality should allow constructing such mechanisms for sum queries, histograms, and more. Furthermore, it should allow performing several queries and satisfy a composition requirement, in a sense that when applying two such mechanisms to two different queries, the resulting composed mechanism should be somewhat approximately optimal for the two queries together. 

Finally, we note that, following prior work we focus on {\em oblivious} mechanisms (see Section~\ref{sec:obliviousMehcanisms} for the technical definition). In Section \ref{sec:count_generalizations}, we show that for the intuitive generalizations of count queries, enabling \emph{non}-oblivious universal mechanisms from which optimal oblivious mechanisms are derived, still leaves the construction of universally optimal mechanisms impossible. The question whether non-oblivious universally-optimal mechanisms exist for some other natural abstract queries, from which all oblivious universally-optimal mechanisms may be derived is left open.

\subsection{Related Work}

Most relevant to our work are the papers by Ghosh, Roughgarden, and Sundararajan~\cite{GRS09} and  by Gupte and Sundararajan~\cite{GS10}. Ghosh et al.\ show that the geometric mechanism (a discrete version of the Laplace mechanism of~\cite{DMNS06}) yields optimal utility for all Bayesian information consumers for a count query. Their proof begins by observing that all differentially private mechanisms correspond to the feasible region of a Linear Program (a polytope), and that minimizing disutility can be expressed as minimizing a linear functional. Hence, every Bayesian information consumer has an optimal mechanism corresponding to a vertex of the polytope, which in turn corresponds to a subset of the constraints of the Linear Program which are tight (optimal mechanisms, not corresponding to the polytope vertices, may also exist). 
They introduce a {\em constraint matrix} that uniquely corresponds to a vertex of the polytope, and indicates which constraints are tight, and which are slack on that vertex. Those constraint matrices that correspond to optimal mechanisms, are shown to have some special structure that allows to derive mechanisms with the same signature (and thus equal) from the geometric mechanism using some deterministic remapping on its output. 

We are also interested in observing the tight constraints in some mechanisms. We will not need the full description of the structure of such a constraint matrix. Instead we only use the observation that tight privacy constraints can be derived only from mechanisms that also obey similar tight constraints.

Gupte and Sundararajan show similar results for the risk-averse utility model, where consumers try to minimizes their maximal worst-case disutility. They provide a full characterization of the mechanisms which are derivable (by random remapping) from the geometric mechanism and use this characterization to construct a universally-optimal mechanism for a count query. An interesting feature of the construction is that it releases noisy answers of the query at different privacy levels, thus keeping more privacy against specific consumers, and enabling more utility to others. 

Also related to our work is the recent work of Kifer and Lin~\cite{KL10} that studies privacy and utility, in a very general setting, from an axiomatic point of view. 
They introduce a partial order on mechanism where mechanism $Y$ is at least as \emph{general} as mechanism $X$ if $X$ can be derived from $Y$ by post processing. They also introduce the concept of maximal generality, which turns to be useful in our proofs.

\section{Preliminaries}

\subsection{Differential Privacy~\cite{DMNS06}}

Simply speaking, a mechanism which preserves differential-privacy will output for any two databases which `look alike' the same result, with similar probabilities. 
More formally, consider databases  $D_1,D_2 \in \D^n$ which consist of $n$ records out of some domain $\D$. The Hamming Distance between $D_1$ and $D_2$ is the number of records on which they differ. We will call databases at distance one {\em neighboring}.
\begin{mydefinition}[Differential Privacy~\cite{DMNS06}]
\label{def:differential}
Let $\M:\D^n \rightarrow \R$ be a probabilistic mechanism.
$\M$ preserves \emph{$\alpha$-differential-privacy} for $\alpha \in \left(0, 1\right)$ if for any two neighboring databases $D_1,D_2 \in \D^n$ and any (measurable) subset of the mechanism's range $S\subseteq \R$,
\begin{equation}
	Pr \left[ \M(D_1) \in S \right] \ge \alpha \cdot Pr \left[ \M(D_2) \in S \right].
\end{equation}
The probability is taken over the coin tosses of the mechanism $\M$.
\end{mydefinition}
Notice that the greater $\alpha$ is the less the mechanism's output depends on the exact query result, and so better privacy is attained. 

\subsection{Oblivious Mechanisms}
\label{sec:obliviousMehcanisms}
We consider a setting where several information consumers are interested in estimating the value of some query $f(\cdot)$ applied to a database $D\in\D^n$, and answered by a differentially private mechanism $\M$. 
Ghosh et al.~\cite{GRS09} show that if no restriction is put on the mechanism, then no universally optimal mechanism exists for count queries (intuitively, universal optimality, defined below, means that all potential consumers minimize their loss simultaneously). On the other hand, universally optimal mechanisms sometimes do exist if we restrict our mechanisms such that their output distribution depends only on the the exact query result (a.k.a.\ {\em oblivious mechanisms}). This is why in~\cite{GRS09} (and later in~\cite{GS10}) only oblivious mechanisms are considered\footnote{Impossibility of universal optimality when the mechanisms are not restricted to being oblivious is proved in~\cite{GRS09} for Bayesian information consumers. For risk-averse consumers,~\cite{GS10} show that non-oblivious mechanisms may be replaced with oblivious ones without affecting the consumers' utility for the worse.}. 
We follow suit and only consider oblivious mechanisms.
We show in Subsection \ref{sec:m_gt_2} that this restriction does not weaken the basic results presented in Section \ref{sec:count_generalizations}.

\begin{mydefinition}[Oblivious Mechanism]
\label{oblivious}
Let $f : \D^n \rightarrow \R_f$ be a query. A mechanism $\mathcal{M}:\D^n\rightarrow \R$ is $f$-\emph{oblivious} (or simply {\em oblivious}) if there exists a randomized function $\tilde\M:\R_f\rightarrow\R$ such that, for all $D\in \D^n$, the distributions induced by $\M(D)$ and $\tilde\M(f(D))$ are identical.
\end{mydefinition}

Combining $\alpha$-differential privacy with obliviousness, we get that for every $i,i' \in \R_f$ which are outputs of neighboring databases $D,D'$ (i.e., $f(D) = i$ and $f(D')=i'$), then 
$\Pr[\tilde\M(i)\in S] \geq \alpha \cdot \Pr[\tilde\M(i')\in S]$ for all $S\subseteq \R$.

\subsubsection{Oblivious Differentially Private Mechanisms for a Count Query}

An oblivious finite-range mechanism $\M:\D^n\rightarrow \R$ estimating $f:\D^n\rightarrow\R_f$ can be described by a row-stochastic matrix $X=\left(x_{i,j}\right)$ of the underlying randomized mapping $\tilde\M$, whose rows are indexed by elements of $\R_f$, and whose columns are indexed by elements of $\R$, where $x_{i,j}$ equals the probability of outputting $j\in\R$ when $f(D) = i$.
Since $\R$ is finite, and information consumers anyway remap the outcome of $\M$, we can assume, wlog, that $\R=\{0,1,2,\ldots,|\R|-1\}$. 

We now consider the case where $\D=\{0,1\}$ and $f(D)$ counts the number of one entries in $D$. Hence, $\R_f=\{0,\ldots,n\}$ and the matrix $X$ is of dimensions $(n+1)\times |\R|$. Preserving $\alpha$-differential privacy poses constraints on the transition matrix $X$ beyond row-stochasticity. Note that for the count query, the query results of two neighboring databases may differ by at most one.  Differential privacy hence imposes the constrains $x_{i,j} \ge \alpha \cdot x_{i+1,j}$ and $x_{i+1,j} \ge \alpha \cdot x_{i,j}$ where $i \in \R_f=\{ 0 \ldots n-1 \}$ and $j\in\R$. Adding row-stochasticity and differential privacy, we get that an oblivious differentially private mechanism for the count query should satisfy the following linear constraints:
\begin{align}
		x_{i,r} \ge \alpha x_{i+1,r} \qquad & \forall i \in \{0,\ldots,n-1\}, \forall r \in \R  \label{eq:LP_first} \\
		\alpha x_{i,r} \le x_{i+1,r} \qquad & \forall i \in \{0,\ldots,n-1\}, \forall r \in \R  \label{eq:LP_second}\\
		\sum_{r\in\R}  x_{i,r} = 1 \qquad & \forall i \in \{0,\ldots,n\}  \\
		x_{i,r}\ge 0  \qquad & \forall i \in \{0,\ldots,n\}, \forall r \in \R \label{eq:LP_last}
\end{align}

\subsection{Utility Models}

We use the utility models defined in~\cite{GRS09} and~\cite{GS10}. In both, a \emph{loss function}  $\ell(i,r)$ quantifies an information consumer's disutility when she chooses to use answer $r$ while the correct answer is $i$. Given a loss function $\ell(\cdot,\cdot)$ of an information consumer, if the exact answer is $i$ then her expected loss is \mbox{$\sum_{r\in\R} x_{i,r} \cdot \ell(i,r)$}.\footnote{This is only true if the consumer uses the mechanism $X$ \emph{directly}, i.e., the consumer leaves the mechanism's output as is, and does not apply a post-processing step. The ability to apply such a post-processing step on the mechanism's output will be discussed in the next sub-section.} Loss functions vary between consumers, and the only assumptions made in~\cite{GRS09,GS10} is that $\ell(i,r)$ depends on $i$ and $|i-r|$ and is monotonically non-decreasing in $|i-r|$ for all $i$. (This is a reasonable requirement that turns to be crucial for the existence of a universally optimal mechanism~\cite{GRS09}.) 
Examples of loss functions include  $\ell_1(i,r) = |i-r|$ (consumers who care to minimize expected mean error); $\ell_2(i,r) = (i-r)^2$ (minimize error variance); and  $\ell_{bin}(i,r)$ that  evaluates to $0$ if $i=r$ and  to $1$ otherwise (minimize number of errors). 

Information consumers differ in their knowledge about the exact $f(D)$. References \cite{GRS09} and~\cite{GS10} model this knowledge differently as we now describe.

\paragraph{Bayesian Model~\cite{GRS09}}

In the Bayesian utility model, an information consumer's knowledge is represented by a vector $\bar p$ where $p_i$ is the consumer's a~priori probability that $f(D)=i$. Having a vector of prior probabilities $\bar p$ and loss function $\ell(\cdot,\cdot)$,  the consumer's expected loss can be expressed as 
\mbox{$\sum_i p_i \cdot \sum_r x_{i,r} \cdot \ell(i,r)$}. The {\em optimal mechanisms}  for this information consumer hence are the solutions of the linear program in the variables $x_{i,r}$ consisting the constraints in Equations~(\ref{eq:LP_first})-(\ref{eq:LP_last}) and the objective
\begin{equation}
	\text{minimize}~\sum_{i\in \R_f} p_i \cdot \sum_{r\in\R}  x_{i,r} \cdot \ell(i,r).
\end{equation}

\paragraph{Risk-Averse Model~\cite{GS10}}

In the risk-averse utility model an information consumer's knowledge restricts the possible values for the exact $f(D)$. This is expressed by a set $S\subseteq\R_f$ of the possible values $f(D)$ can take. The consumer is interested in minimizing her maximal expected loss conditioned on $f(D)\in S$, i.e., $\max_{i \in S} \sum_r x_{i,r}\cdot \ell(i,r)$.
Similarly to the above, the optimal mechanism for an information consumer is a solution to a linear program consisting the constraints in Equations~(\ref{eq:LP_first})-(\ref{eq:LP_last}) and the objective
\begin{equation}
	\text{minimize  } \max_{i \in S} \sum_{r\in \R}  x_{i,r}\cdot \ell(i,r).
\end{equation}

\subsection{Remapping and Generality}

An information consumer might have access to a private mechanism $U$ which is not tailored specifically for her needs (i.e., to her prior knowledge and loss function). Yet, she may be able to recover a better mechanism for her needs by means of post-processing, which we will denote \emph{remapping}. To intuit remapping, consider a consumer that knows that for the specific database the count query cannot yield the answer $0$. If that consumer receives a $0$, it may be beneficial for her to remap it to $1$. (Recall that the loss function is monotone in $|i-r|$.) Denoting the given mechanism by $U$ and the remapping by $T$ (a row-stochastic linear transformation, $T$ has no access to the information of the database other then the output of $U$), the actual mechanism that is used by the information consumer is denoted $T \circ U$ (in matrix form: $UT$).

Notice that given a mechanism $U$ with a finite range, an information consumer can find the optimal remapping $T$ for her (such that $T \circ U$ has optimal utility), by constructing a linear program in which $T = (t_{i,j})$ are the program variables \cite{GS10}.

\begin{mydefinition}[Derivable Mechanisms, Generality Partial Order \cite{KL10}] \label{def:derivable}
Let $X,Y$ be private mechanisms. We say that a mechanism $X$ is \emph{derivable} from a mechanism $Y$ if there exists a random remapping $T$ of the results of mechanism $Y$, such that $X= T \circ Y$. We also say that $Y$ is \emph{at least as general as $X$}, and denote this relation by $X \preceq _G Y$.
If $X \preceq _G Y$ and $Y \preceq _G X$ we say that $X,Y$ are \emph{equivalent}. 
\end{mydefinition}

\begin{mydefinition}[Maximal Generality \cite{KL10}]
\label{def:maximallygeneral}
Let $X$ be an $\alpha$-differentially private mechanism. $X$ is \emph{maximally general} if for every $\alpha$-differentially private mechanism $Y$, if $X \preceq _G Y$ then $Y \preceq _G X$.
\end{mydefinition}

After introducing the notion of maximally general mechanisms (for any definition of privacy), Kifer et al.~fully characterize all maximally general private mechanisms with a finite input space in the differential privacy setting. First they introduce the concept of \emph{column-graphs}\footnote{Kifer et al.~actually define \emph{row graphs} and not \emph{column graphs}. We follow the matrix structure of \cite{GRS09, GS10} which is simply the transposed matrix of the one used by Kifer et al., hence the difference in terminology.} of a private mechanism, which mark the tight privacy constraints in one column of the mechanism $X$.

\begin{mydefinition}[Column graph \cite{KL10}]
\label{def:columngraphs}
Let $X$ be an $\alpha$-differentially private mechanism with a finite input space. Let $r$ be some possible output of $X$, and $x_r$ be its corresponding column in $X$. Let $I$ be the input space of $X$ (corresponding to $X$'s rows). The graph associated with this column has $I$ as the set of nodes, and for any $i_1,i_2 \in I$, there is a directed edge $(i_1, i_2)$ if $i_1$ and $i_2$ match neighboring databases and $x_{i_1,r} = \alpha x_{i_2,r}$, and a directed edge $(i_2, i_1)$ if $x_{i_2,r} = \alpha x_{i_1,r}$. The direction of the edges is only necessary to distinguish between maximally general mechanisms which have similar undirected column-graphs, but it will not be essential to the rest of this article.
\end{mydefinition}

Kifer and Lin characterize the maximally general differentially private mechanisms with a finite input space:

\begin{mytheorem}[\cite{KL10}]
\label{thm:maximal_characterization}
Fix a privacy parameter $\alpha$ and a database query $f$ with a finite range for databases of a specific size. Let $X$ be an $\alpha$-differentially private mechanism with a finite range. Then $X$ is maximally general iff each column graph of $X$'s columns (according to the privacy constraints implied by $f$) is connected.
\end{mytheorem}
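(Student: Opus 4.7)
The plan is to argue the two directions of the iff separately, with a single \emph{tight-edge propagation} lemma doing the heavy lifting in both.

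For the sufficiency direction ($\Leftarrow$), I would take the connectedness hypothesis on $X$'s column graphs together with any $Y$ satisfying $X = YT$ for a stochastic $T$ (i.e., $X \preceq_G Y$), and construct a stochastic $S$ with $Y = XS$. The engine is the following propagation lemma: for each tight edge $(i_1, i_2)$ in $X$'s column graph for output $r$ (say $x_{i_1, r} = \alpha x_{i_2, r}$), compare the identities $x_{i_1, r} = \sum_k y_{i_1, k} t_{k, r}$ and $x_{i_2, r} = \sum_k y_{i_2, k} t_{k, r}$ term-by-term. Using the $\alpha$-DP inequalities $y_{i_1, k} \ge \alpha y_{i_2, k}$, which hold in each summand, the tight relation between the two sums forces the equality $y_{i_1, k} = \alpha y_{i_2, k}$ in every summand with $t_{k, r} > 0$. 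Because $X$'s column graph for $r$ is connected and spans the input space, the propagated tight constraints pin down each such column $y_{\cdot, k}$ up to a positive scalar, so $y_{\cdot, k} = c_k \cdot x_{\cdot, r}$ for a unique $r = r(k)$ (up to the equivalence of proportional columns) and some $c_k > 0$. Setting $s_{r(k), k} := c_k$ and zero elsewhere gives $XS = Y$; row-stochasticity of $S$ follows from combining $YT = X$ with the fact that, by the propagation argument, $t_{k, r}$ is nonzero only for $r$ equivalent to $r(k)$.

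For the necessity direction ($\Rightarrow$), suppose column $r^*$ of $X$ has disconnected column graph with components $I_1, \ldots, I_m$, $m \ge 2$. I would construct a mechanism $Y$ strictly more general than $X$ by splitting $r^*$ into $m$ new columns while keeping the other columns of $X$ intact:
\[
y_{i, r^*_j} := \beta_j^{(k)} \cdot x_{i, r^*} \qquad \text{for } i \in I_k, \; j = 1, \ldots, m,
\]
with positive weights $\beta_j^{(k)}$ satisfying $\sum_j \beta_j^{(k)} = 1$ (preserving row sums). Disconnectedness means every cross-component ratio $x_{i, r^*}/x_{i', r^*}$ (for neighbors $i \in I_k$, $i' \in I_{k'}$) lies strictly inside $(\alpha, 1/\alpha)$, so inside an open neighborhood of the uniform weight $\beta_j^{(k)} = 1/m$ the split $Y$ remains $\alpha$-DP. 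The direction $X \preceq_G Y$ is immediate: the deterministic $T$ merging $r^*_1, \ldots, r^*_m$ back to $r^*$ and acting as identity on the remaining columns witnesses $YT = X$.

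The main obstacle is showing the reverse derivation $Y \not\preceq_G X$, and this is where I would tune the weights carefully. Choose the $\beta_j^{(k)}$ so that at least one column $r^*_j$ of $Y$ carries a \emph{new} tight cross-component edge, obtained by pushing $(\beta_j^{(k)}/\beta_j^{(k')}) \cdot x_{i, r^*}/x_{i', r^*}$ to exactly $1/\alpha$ at some specific neighboring cross-component pair $(i, i')$ not already appearing as a tight edge in any column of $X$. Because the slack in the cross-component ratios is strictly positive and the weights are continuous parameters while $X$'s columns form a finite set, generic choices achieve this. Assuming $Y = XS$ for a stochastic $S$ and applying the propagation lemma in reverse to this engineered tight edge would then force $x_{i, r} = \alpha x_{i', r}$ for every $r$ with $s_{r, r^*_j} > 0$; by construction no $X$-column satisfies this, so $s_{\cdot, r^*_j} \equiv 0$, forcing $y_{\cdot, r^*_j} \equiv 0$ and contradicting $y_{i, r^*_j} = \beta_j^{(k)} x_{i, r^*} > 0$. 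The delicate point, and the step that requires the bulk of the technical work, is guaranteeing that the engineered tight edge avoids the finitely many tight-edge signatures of the other columns of $X$, and handling the degenerate case of singleton components (where there are no within-component tight edges to appeal to); the strict cross-component slack furnished by disconnectedness is precisely what makes this possible.
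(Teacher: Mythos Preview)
The paper does not prove Theorem~\ref{thm:maximal_characterization}; it is quoted from Kifer and Lin~\cite{KL10} and used as a black box (most visibly in the proofs of Claim~\ref{clm:impossibility_sumquery} and Claim~\ref{clm:Gf_cycle}). There is therefore no ``paper's own proof'' to compare your proposal against.

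On the substance of your proposal itself: the sufficiency direction is sound, and your tight-edge propagation lemma is exactly the mechanism the present paper relies on repeatedly (see the proof of Claim~\ref{clm:impossibility_sumquery}, where a tight constraint in a column of $Y=XT$ is pushed back to the contributing columns of $X$). Connectedness of the column graph then pins each contributing $Y$-column to a scalar multiple of the corresponding $X$-column, and the row-stochasticity bookkeeping you sketch goes through once one collapses proportional columns into a single representative.

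For the necessity direction your splitting construction is the natural one, and the route via an engineered tight edge is viable, but the step you yourself flag as ``the bulk of the technical work'' is not merely bookkeeping. Two points deserve care. First, pushing one cross-component ratio to $1/\alpha$ by scaling components can overshoot at \emph{other} cross-component neighbor pairs between the same two components; you must choose the pair whose slack is minimal in the chosen direction, not a generic pair. Second, the claim that the engineered tight edge can always be made to avoid every tight-edge signature already present in the other columns of $X$ is not automatic from a genericity argument: the set of admissible weight vectors may be lower-dimensional than you expect once the row-sum and DP-feasibility constraints are imposed, and in degenerate instances the ``new'' tight edge could coincide with an existing one. A cleaner alternative, closer in spirit to the Kifer--Lin treatment, is to argue that after the split every column graph of $Y$ is connected (by choosing the minimal-slack cross-component pair in each split column), so $Y$ is maximally general by the direction you have already proved; then $Y\preceq_G X$ would force $X$ and $Y$ to be equivalent, which you can rule out by a rank or distinguishability argument on the split columns rather than by chasing individual tight edges.
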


This theorem shows that we wish to maximize the set of tight privacy constraints in order to make a private mechanism as general as possible. Notice that having just one entry of a column in $X$ and the spanning tree of this column's graph (we need to know the direction of the edges as well), determines all the entries of this column.

\subsection{Universal Mechanisms}
Consider a collection of Bayesian information consumers, and suppose we wish to enable each of the information consumers to sample a result from a differentially private mechanism optimizing her utility. Ghosh et al.~\cite{GRS09} showed that this does not necessarily require executing multiple mechanisms: if the query is a count query, then it is possible to construct one {\em universally optimal} mechanism $U$, from which all information consumers can {\em simultaneously} recover an optimal mechanism for their needs by {\em remapping}. I.e., every information consumer has an optimal private mechanism which is derivable from $U$. This result is repeated for risk-averse information consumers by Gupte et al.~\cite{GS10}. More formally:

\begin{mytheorem}[Universal optimality, Bayesian consumers~\cite{GRS09}] \label{thm:UniversalOptimalityBayesian}
Fix a privacy parameter $\alpha\in(0,1)$. There exists an $\alpha$-differentially private mechanism $U$ for a single count query, such that for every prior $\bar p$ and every monotone loss function $\ell(\cdot,\cdot)$ there exists a (deterministic) remapping $T$ such that $T \circ U$ implements an optimal oblivious mechanism for $\bar p, \ell(\cdot,\cdot)$.
\end{mytheorem}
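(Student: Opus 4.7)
The plan is to exhibit the truncated geometric mechanism as the universally optimal $U$ and to prove optimality through the linear-programming structure of oblivious $\alpha$-differentially private mechanisms. Define $U$ on inputs $i \in \R_f = \{0,1,\ldots,n\}$ by having it output $r \in \{0,1,\ldots,n\}$ with probability proportional to $\alpha^{|i-r|}$. Differential privacy of $U$ follows from the triangle inequality; moreover, every constraint in (\ref{eq:LP_first})--(\ref{eq:LP_second}) is tight at $U$, so each column-graph of $U$ is the full path on $\R_f$, and therefore $U$ is maximally general by Theorem~\ref{thm:maximal_characterization}.

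Next I would fix a Bayesian consumer with prior $\bar p$ and monotone loss $\ell$, and consider the polytope $\mathcal{P}$ of oblivious $\alpha$-DP mechanisms with some finite range $\R$ cut out by (\ref{eq:LP_first})--(\ref{eq:LP_last}). Since the consumer's objective $\sum_{i} p_i \sum_{r} x_{i,r} \ell(i,r)$ is linear in the $x_{i,r}$'s, an optimum is attained at a vertex $X^\star$ of $\mathcal{P}$. The heart of the proof is a structural claim: at every vertex of $\mathcal{P}$, each column $x^\star_{\cdot,r}$ has contiguous support $\{\underline{i}_r,\ldots,\overline{i}_r\} \subseteq \R_f$ on which all privacy inequalities are tight, and consequently takes the form $x^\star_{i,r} = c_r \alpha^{|i - a_r|}$ for some anchor $a_r \in \R_f$ and some scalar $c_r \geq 0$ (possibly one-sided truncated at $0$ or $n$). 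The reason is that slackness within a column's support would admit a small feasible perturbation, contradicting the vertex property, while the spanning-tree observation following Theorem~\ref{thm:maximal_characterization} then pins down each column up to a single scalar once its tight-constraint subgraph is connected.

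Given this ``tent'' shape, I construct a deterministic remapping $T$ column-wise: for each output $s \in \{0,\ldots,n\}$ of $U$, assign an output $\pi(s) \in \R$ of $X^\star$ whose anchor $a_{\pi(s)}$ equals $s$, and set $T_{s,\pi(s)} = 1$. Summing the geometrically-shaped columns of $U$ along the fibres of $\pi$ reproduces the columns of $X^\star$ up to the normalizing scalars $c_r$, which are in turn forced by row-stochasticity; hence $T \circ U = X^\star$. Monotonicity of $\ell(i,\cdot)$ in $|i-r|$ ensures that a deterministic $T$ is without loss of generality --- any randomized optimal remapping can be replaced by selecting a modal target without increasing expected loss. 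The main obstacle is the structural claim about vertices of $\mathcal{P}$: proving that each column at a vertex is a single geometric tent requires a careful accounting of which privacy constraints must be tight at any vertex and a sensitivity-to-perturbation argument of the flavour of the constraint-matrix analysis of~\cite{GRS09}. Once that claim is in hand, assembling $\pi$ and checking $T \circ U = X^\star$ is routine, and the theorem follows.
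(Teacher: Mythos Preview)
This theorem is not proved in the present paper: it is quoted from \cite{GRS09} as background (see the discussion preceding the theorem and the summary of the \cite{GRS09} argument in the Related Work section). So there is no in-paper proof to compare against; the right comparison is to the sketch the paper gives of the \cite{GRS09} argument, and your outline does follow that sketch --- LP polytope, optimum at a vertex, a ``constraint-matrix'' style structural analysis of the vertex, and then derivation from the geometric mechanism by a deterministic remap.

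That said, your proposal has real gaps beyond the acknowledged ``main obstacle.'' First, your $U$ is not the truncated geometric mechanism of \cite{GRS09}: setting $\Pr[U(i)=r]\propto\alpha^{|i-r|}$ normalizes \emph{rows}, so the normalizer $Z_i=\sum_r\alpha^{|i-r|}$ depends on $i$, the columns are not proportional to $\alpha^{|i-r|}$, and Theorem~\ref{thm:maximal_characterization} does not apply as you claim. The correct $U$ clamps the two-sided geometric to $\{0,\ldots,n\}$, which makes each \emph{column} proportional to $\alpha^{|i-r|}$ and row-stochasticity automatic. Second, your structural claim is stated for \emph{every} vertex of $\mathcal{P}$; that is false --- vertices need not have single-tent columns. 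What \cite{GRS09} actually prove (and what you need) is that monotonicity of $\ell$ forces the \emph{optimal} vertex to have this structure; monotonicity is doing essential work here, not merely licensing a deterministic $T$ at the end. Third, your construction of $\pi$ presumes each $s\in\{0,\ldots,n\}$ is the anchor of some column of $X^\star$, and that summing several tent-columns of $U$ yields a single tent-column; neither holds in general. The \cite{GRS09} remap instead collapses contiguous blocks of $U$-outputs to each column of $X^\star$, and verifying $T\circ U=X^\star$ uses the specific boundary behaviour of the truncated geometric, not just ``summing along fibres.'' Fixing the definition of $U$ and restating the structural lemma as a property of \emph{optima under monotone loss} (rather than of all vertices) are the two changes that would put your outline on track.
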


\begin{mytheorem}[Universal optimality, risk-averse consumers~\cite{GS10}]\label{thm:universalOptimalityRiskAverse}
Fix a privacy parameter $\alpha\in(0,1)$. There exists an $\alpha$-differentially private mechanism $U$ for a single count query, such that for every set $S$ of possible outcomes and every monotone loss function $\ell(\cdot,\cdot)$ there exists a (probabilistic) remapping $T$ such that $T \circ U$ implements an optimal oblivious mechanism for $S, \ell(\cdot,\cdot)$.
\end{mytheorem}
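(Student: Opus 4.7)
The plan is to mirror the strategy of Gupte and Sundararajan. First I would reformulate, for each risk-averse information consumer with knowledge set $S \subseteq \R_f$ and monotone loss function $\ell$, the task of finding her optimal oblivious mechanism as a linear program: introduce an auxiliary variable $z$, add the constraints $z \geq \sum_{r} x_{i,r}\,\ell(i,r)$ for every $i \in S$, and minimize $z$ subject to the privacy and stochasticity constraints (\ref{eq:LP_first})--(\ref{eq:LP_last}). An optimum is attained at a vertex of the feasible polytope, and by Theorem~\ref{thm:maximal_characterization} such a vertex corresponds to a maximally general $\alpha$-differentially private mechanism $M_{S,\ell}$ whose column structure is determined (up to one representative entry per column) by its connected column-graphs.

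Next I would exploit this structural rigidity. The tight privacy constraints between consecutive rows force each column of $M_{S,\ell}$ to be a two-sided geometric progression in the row index (with ratio $\alpha$) away from its ``peak'' row, so the effective rows of $M_{S,\ell}$ are indexed by $S$ while rows outside $S$ are pinned to their neighbours by tight constraints. Pushing this further, I would isolate a finite family of canonical mechanisms $\{M_S\}_{S \subseteq \R_f}$, one per possible knowledge set, such that every $M_{S,\ell}$ is derivable from $M_S$ via a probabilistic remapping that depends only on $\ell$; this remapping can be found by solving a small LP whose variables are the entries of the remapping matrix, as in~\cite{GS10}.

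Finally I would construct the universal mechanism $U$ as a single $\alpha$-differentially private mechanism from which every $M_S$ is derivable by probabilistic remapping. A natural candidate releases the count query at multiple resolutions simultaneously: $U$ samples a pair $(r,j)$ where $r$ comes from a two-sided geometric distribution centred at $f(D)$ and $j$ encodes an ``effective granularity'' calibrated so that, conditioned on $j$, the marginal of $r$ coincides with $M_S$ for some $S$ associated with $j$. A consumer with knowledge set $S$ first applies a probabilistic remapping $T_S$ to extract $M_S$ from $U$ and then composes with the loss-dependent remapping from the previous step. The main obstacle is to show that $U$ itself is $\alpha$-differentially private while still encoding enough information to reconstruct every $M_S$: releasing multiple correlated noisy counts naively compounds privacy loss, so the correlations between $r$ and $j$ must be engineered so that, for neighbouring databases, the joint distribution ratios remain bounded by $\alpha$ (essentially, each additional bit of $j$ must be ``paid for'' by tightening the geometric decay of $r$). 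A secondary subtlety, absent in the Bayesian case of Theorem~\ref{thm:UniversalOptimalityBayesian}, is that the remapping must be genuinely probabilistic---deterministic remappings do not suffice for the $\max$-of-expectations objective, but the LP in the second step guarantees a stochastic remapping always exists.
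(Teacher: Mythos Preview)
This theorem is not proved in the present paper; it is quoted from Gupte and Sundararajan~\cite{GS10} as background, so there is no ``paper's own proof'' to compare against. The only information the paper gives about the argument appears in the Related Work section: \cite{GS10} first \emph{characterize} which oblivious mechanisms are derivable (by random remapping) from the geometric mechanism, and then show that for every risk-averse consumer some optimal mechanism lies in that class; the universal $U$ they build releases noisy counts at several \emph{privacy levels} (not several resolutions of a single noisy count).

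Your sketch has the right high-level shape, but the first substantive step contains a genuine gap. You write that after introducing the auxiliary variable $z$ the optimum is attained at a vertex of the feasible polytope, and then invoke Theorem~\ref{thm:maximal_characterization} to conclude that the optimal mechanism $M_{S,\ell}$ is maximally general. These two polytopes are different. The LP lives in $(x,z)$-space, and a vertex there need not project to a vertex of the mechanism polytope in $x$-space. Equivalently, $\max_{i\in S}\sum_r x_{i,r}\ell(i,r)$ is a convex (piecewise-linear) function of $x$, and minimizing a convex function over a polytope can land strictly in the interior of a face: take $\min\max(x_1,x_2)$ over the simplex $x_1+x_2=1$, $x_i\ge 0$, whose unique optimum $(1/2,1/2)$ is not a vertex. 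So you have no right to conclude that the column graphs of $M_{S,\ell}$ are connected, and the ``two-sided geometric progression'' structure you rely on in the next paragraph is unjustified. This is exactly the place where the risk-averse case diverges from the Bayesian case (where the objective is linear and vertex-optimality is automatic), and it is why \cite{GS10} proceed instead by characterizing derivability from the geometric mechanism directly rather than by arguing about vertices.

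A second, smaller issue: your description of $U$ as sampling $(r,j)$ with $j$ an ``effective granularity'' is too vague to assess, and does not obviously match the multi-privacy-level construction the paper attributes to~\cite{GS10}. Without a concrete definition of the joint law of $(r,j)$ there is no way to check the $\alpha$-differential-privacy claim you flag as the main obstacle.
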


It turns out that in both theorems $U$ is realized by the geometric mechanism -- a variant of the mechanism adding Laplace noise of~\cite{DMNS06}.
Note that there may be optimal mechanisms which cannot be derived from the geometric mechanism, but for every information consumer there is at least one private mechanism that is derivable from the geometric mechanism and is optimal for her.

\section{Impossibility of Universally Optimal Mechanisms for Generalizations of Count Queries}
\label{sec:count_generalizations}

When the domain of the database records is $\{0,1\}$, a count query is equivalent to a sum query. Theorems~\ref{thm:UniversalOptimalityBayesian}  and~\ref{thm:universalOptimalityRiskAverse} can hence be thought of as applying to a {\em sum query} over the integers, where the domain of the database is Binary. It is natural to ask whether the results of these theorems can be extended to showing that universally optimal mechanisms exist for sum queries when the underlying data is taken from a larger domain such as \mbox{$\D = \{0,1,\ldots,m\}$} where $m\ge 2$. We answer this question negatively.

Consider the case $m=2$. Recall that an oblivious differentially private mechanism can be described by a row-stochastic matrix $X = (x_{i,j})$, such that $x_{i,j}$ is the probability of the mechanism to return $j$ when the exact result is $i$.
A difference of the case $m=2$ from count queries ($m=1$) is that applying a sum query to two neighboring databases may yield results which differ by 0, 1, or 2 (instead of 0 or 1). Therefore, in the linear program describing mechanism $X$ equations~(\ref{eq:LP_first}) and (\ref{eq:LP_second}), should be replaced by the following four constraints (the range for $i$ in the other equations should be modified to $0,\ldots,2n$):
\remove{\begin{align*}
		x_{i,r} \ge \alpha x_{i+1,r} \qquad & \forall i \in \{0,\ldots,2n-1\}, \forall r \in \R  \\
		\alpha x_{i,r} \le x_{i+1,r} \qquad & \forall i \in \{0,\ldots,2n-1\}, \forall r \in \R \\
		x_{i,r} \ge \alpha x_{i+2,r} \qquad & \forall i \in \{0,\ldots,2n-2\}, \forall r \in \R  \\
		\alpha x_{i,r} \le x_{i+2,r} \qquad & \forall i \in \{0,\ldots,2n-2\}, \forall r \in \R 
\end{align*}}
\begin{align*}
& x_{i,r} \ge \alpha x_{i+1,r}, \quad \alpha x_{i,r} \le x_{i+1,r} \quad & \forall i \in \{0,\ldots,2n-1\}, \forall r \in \R  \\
& x_{i,r} \ge \alpha x_{i+2,r}, \quad \alpha x_{i,r} \le x_{i+2,r} & \forall i \in \{0,\ldots,2n-2\}, \forall r \in \R
\end{align*}
Once again, a consumer's optimal mechanism can be found by solving a linear program with all the constraints and the appropriate target function.

\subsection{The Basic Impossibility Result for Sum Queries}
\label{sec:sum_queries}

We first consider the case where the database contains $n=1$ record, taking values in \{0,1,2\} (i.e., $m=2$). 
Later, we generalize to $n\geq 1$ and $m\geq 2$.
Note that in the case of $n=1$, the non-oblivious mechanisms are identical to oblivious mechanisms. We consider non-oblivious universal mechanisms as well when generalizing this result lo larger values of $n$.
\begin{myobservation}
\label{obsrv:Bayesian}
In the Bayesian model there exists an information consumer whose only optimal mechanism is $X = \frac{1}{1+2\alpha} \cdot 
\begin{smallbmatrix}
	1 & \alpha & \alpha  \\
	\alpha & 1 & \alpha  \\
	\alpha & \alpha & 1
\end{smallbmatrix}$ and an information consumer whose optimal mechanisms are all of the form 
$ Y = \frac{1}{1+\alpha} \cdot 
\begin{smallbmatrix}
	1 & \alpha & 0 \\
	\alpha & 1 & 0 \\
	q & 1+\alpha-q & 0       
\end{smallbmatrix}$, 
where $q \in \left[\alpha, 1\right]$.
\end{myobservation}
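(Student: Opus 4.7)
The plan is to exhibit explicit Bayesian information consumers whose optimization problems have the claimed unique optimal solutions. In both cases I use the binary loss function $\ell_{bin}(i,r) = 1 - \mathbf{1}[i=r]$, for which the expected loss reduces to $1 - \sum_i p_i x_{i,i}$; hence minimizing loss is equivalent to maximizing a weighted sum of the diagonal entries of $X$ subject to the LP constraints~(\ref{eq:LP_first})--(\ref{eq:LP_last}). Since $n=1$, every pair of distinct single-record databases is neighboring, so the differential privacy constraints $x_{i,r} \ge \alpha x_{j,r}$ hold for \emph{all} ordered pairs $i \ne j$ and all columns $r$.

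For the first consumer, take the uniform prior $p_0 = p_1 = p_2 = 1/3$, so the objective becomes maximizing $x_{0,0}+x_{1,1}+x_{2,2}$. Row-stochasticity of row $i$ together with the off-diagonal privacy bounds $x_{i,j} \ge \alpha x_{j,j}$ for $j \ne i$ yields $x_{i,i} \le 1 - \alpha \sum_{j \ne i} x_{j,j}$. Summing these three inequalities gives $\sum_i x_{i,i} \le 3/(1+2\alpha)$, and equality forces each diagonal entry to equal $1/(1+2\alpha)$ (subtracting pairs of tight constraints shows all $x_{i,i}$ coincide). Once the diagonal is pinned, every row has off-diagonal mass exactly $2\alpha/(1+2\alpha)$ which must be split to meet the two tight lower bounds $\alpha/(1+2\alpha)$, so every off-diagonal entry is also forced. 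One then verifies the remaining cross-row constraints hold (they are automatically satisfied by symmetry), yielding the claimed unique $X$.

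For the second consumer, take the prior $p_0 = p_1 = 1/2$, $p_2 = 0$, so the objective is to maximize $x_{0,0} + x_{1,1}$. The analogous row-sum argument now gives $x_{0,0} \le 1 - \alpha x_{1,1}$ and $x_{1,1} \le 1 - \alpha x_{0,0}$, hence $x_{0,0} + x_{1,1} \le 2/(1+\alpha)$ with equality uniquely at $x_{0,0} = x_{1,1} = 1/(1+\alpha)$. Tightness forces the third-column entries of rows 0 and 1 to vanish, namely $x_{0,2} = x_{1,2} = 0$; the privacy constraint $x_{0,2} \ge \alpha x_{2,2}$ then forces $x_{2,2} = 0$ as well, and tightness of $x_{0,1} \ge \alpha x_{1,1}$ and $x_{1,0} \ge \alpha x_{0,0}$ fixes the first two rows exactly as in $Y$.

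The final step, which is the main verification to grind through, is to show that the third row is free precisely to the extent described. Writing $x_{2,0} = q/(1+\alpha)$ and $x_{2,1} = (1+\alpha-q)/(1+\alpha)$ (forced by $x_{2,2}=0$ and row-stochasticity), I would enumerate the six privacy constraints between row $2$ and the fixed rows $0,1$: the bounds $x_{2,0} \ge \alpha x_{0,0}$ and $x_{1,0} \ge \alpha x_{2,0}$ give $q \in [\alpha,1]$, the bounds $x_{2,1} \ge \alpha x_{1,1}$ and $x_{0,1} \ge \alpha x_{2,1}$ give the same interval symmetrically, and the remaining two (involving $x_{1,0}, x_{0,1}$ paired against $x_{2,0}, x_{2,1}$) are strictly weaker and hence subsumed. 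This establishes that $q \in [\alpha,1]$ is both necessary and sufficient, so the set of optimal mechanisms is exactly the one-parameter family $Y$.
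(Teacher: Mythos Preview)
Your proposal is correct and follows essentially the same approach as the paper: you use the same two consumers (uniform prior and $(\tfrac12,\tfrac12,0)$ prior, both with $\ell_{bin}$), the same summation-of-privacy-constraints technique to bound the trace, and the same tightness analysis to pin down uniqueness. The only cosmetic difference is that the paper sums the inequalities $\alpha x_{i,i}\le x_{j,i}$ column-wise first and then over $i$, whereas you package them via row-stochasticity of row $i$ before summing; the resulting bound $\sum_i x_{i,i}\le 3/(1+2\alpha)$ and its tightness conditions are identical. Your treatment of the third row of $Y$ is in fact more detailed than the paper's, which simply states ``a similar proof shows'' the claimed form.
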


\begin{proof}
Consider an information consumer with a prior $\bar p = (\frac{1}{3},\frac{1}{3}, \frac{1}{3})$ and a loss function $\ell_{bin}$ (i.e., a penalty of $1$ whenever she chooses an answer different from the exact result, and no penalty otherwise). It is easy to see that no optimal mechanism for this consumer outputs a value not in $\{0,1,2\}$. 

The information consumer wishes to minimize 
$$\sum_{i=0}^2 p_i  \sum_{r=0}^2 x_{i,r} \cdot \ell(i,r) 
=  \frac{1}{3} \sum_{i=0}^{2} \sum_{r\ne i} x_{i,r} 
= \frac{1}{3} \sum_{i=0}^{2} \left( 1-x_{i,i} \right)  = 1 - \frac{1}{3} \sum_{i=0}^{2} x_{i,i}.$$
And so, the consumer's goal is to maximize $\sum_{i=0}^{2} x_{i,i}$ subject to maintaining $\alpha$-differential privacy.

For $i \in \{0,1,2\}$, having $\alpha$-differential privacy implies 
\begin{equation} \label{eq:DPineq}
\alpha x_{i,i} \le  x_{j,i} \quad \forall j \in \{0,1,2\}\setminus \{i\},
\end{equation}
and hence (by summing up Equation~(\ref{eq:DPineq}) for $j\not=i$), we get
\begin{equation} \label{eq:sumeq}
2 \alpha x_{i,i} 
	=   \sum_{\substack{j=0 \\ j \ne i}}^{2} \alpha x_{i,i} 
	\le \sum_{\substack{j=0 \\ j \ne i}}^{2} x_{j,i}. 
\end{equation}
Summing up Equation~(\ref{eq:sumeq}) for $i \in \{0,1,2\}$ we get 
\begin{equation*}
	\sum_{i=0}^{2} 2 \alpha x_{i,i} 
	\le \sum_{i=0}^{2} \sum_{\substack{j=0 \\ j \ne i}}^{2} x_{j,i} 
	= \sum_{i=0}^{2} (1-x_{i,i})
	= 3 - \sum_{i=0}^{2} x_{i,i},
\end{equation*}
and we can now conclude that $\sum_{i=0}^{2} x_{i,i} \le \frac{3}{2 \alpha +1}$.
This inequality is tight iff Equation~(\ref{eq:DPineq}) is tight (i.e., $x_{j,i} = \alpha x_{i,i}$) for every $i\not=j$. In that case, we get the following system of linear equations: 
\begin{align*}
	x_{11} + \alpha x_{22} + \alpha  x_{33}  & =  1\\
	\alpha x_{11} + x_{22} + \alpha  x_{33}  & =  1\\
	\alpha x_{11} + \alpha x_{22} +  x_{33}  & =  1
\end{align*}
Since the three equations are linearly independent, we get a {\em unique} solution: $x_{1,1}=x_{2,2}=x_{3,3}=\frac{1}{1+2\alpha}$. 

A similar proof shows that mechanisms of the form $Y$ are the only mechanisms optimal for information consumers with a prior $p_0 = p_1 = \frac{1}{2}, p_2=0$ and loss function $\ell_{bin}$. 

It may seem like we restrict ourselves only to information consumers with the $\ell_{bin}$ loss function. Note that, according to Theorem \ref{thm:maximal_characterization}, there are not so many maximally general mechanisms whose range is a subset of $\{ 0, 1, 2 \}$, and some of them are not optimal for any consumer. 
Therefore, the mechanisms described are also the only optimal mechanisms for a variety of other information consumers, such as whose prior is $p_0 = p_1 = \frac{1}{2},\, p_2=0$ and loss function is $\ell_1$. Also, even more such consumers can be found easily in any sequence of consumers which converge to consumers with such unique optimal mechanisms (i.e., their priors and loss functions converge to the prior and loss function of the consumer we chose). Such information consumers with close priors and close loss functions to the ones described above will have the same unique optimal mechanisms.
\end{proof}

\begin{myobservation}
\label{obsrv:risk-averse}
In the risk-averse model there exists an information consumer whose only optimal mechanism is  $X = \frac{1}{1+2\alpha} \cdot 
\begin{smallbmatrix}
	1 & \alpha & \alpha  \\
	\alpha & 1 & \alpha  \\
	\alpha & \alpha & 1
\end{smallbmatrix}$ and an information consumer whose optimal mechanisms are all of the form 
$ Y = \frac{1}{1+\alpha} \cdot 
\begin{smallbmatrix}
	1 & \alpha & 0 \\
	\alpha & 1 & 0 \\
	q & 1+\alpha-q & 0       
\end{smallbmatrix}$, 
where $q \in \left[\alpha, 1\right]$.
\end{myobservation}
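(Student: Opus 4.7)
The plan is to reuse the two consumers from the proof of Observation~\ref{obsrv:Bayesian}, reinterpreted in the risk-averse model: each consumer takes $S$ to be the support of the prior she used in the Bayesian proof, and both keep the binary loss $\ell_{bin}$. Under $\ell_{bin}$ the risk-averse objective becomes
\[
\max_{i\in S}\sum_{r\in\R} x_{i,r}\,\ell_{bin}(i,r) \;=\; \max_{i\in S}(1-x_{i,i}),
\]
so minimizing it is the same as maximizing $v:=\min_{i\in S}x_{i,i}$. The two resulting linear programs mimic the Bayesian ones, and the derivations of $X$ and $Y$ reduce to chasing equality cases in the $\alpha$-DP plus row-stochasticity constraints.

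For the first claim I take $S=\{0,1,2\}$. Since $n=1$ and $m=2$, the databases $[0],[1],[2]$ are pairwise at Hamming distance $1$, so every pair of rows is coupled by $\alpha$-DP. Combining $x_{i,j}\ge\alpha x_{j,j}\ge\alpha v$ (for $i\ne j$ in $\{0,1,2\}$) with $\sum_r x_{i,r}=1$ yields $x_{i,i}\le 1-2\alpha v$, hence $v\le\frac{1}{1+2\alpha}$. At equality every diagonal entry must equal $v$, every off-diagonal entry inside $\{0,1,2\}\times\{0,1,2\}$ must equal $\alpha v$ (the row-sum budget is exhausted), and no mass is left for outputs outside $\{0,1,2\}$. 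This pins down $X$ uniquely.

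For the second claim I take $S=\{0,1\}$. The same argument restricted to rows $0,1$ gives $\min(x_{0,0},x_{1,1})\le\frac{1}{1+\alpha}$, with equality forcing $x_{0,0}=x_{1,1}=\frac{1}{1+\alpha}$, $x_{0,1}=x_{1,0}=\frac{\alpha}{1+\alpha}$, and killing all mass in rows $0,1$ outside $\{0,1\}$. Row $2$ is untouched by the objective but is still DP-coupled to rows $0$ and $1$: for $r\in\{0,1\}$ the DP bounds cut $x_{2,r}$ down to the interval $\bigl[\frac{\alpha}{1+\alpha},\frac{1}{1+\alpha}\bigr]$, while $x_{2,r}=0$ for every other $r$ (since the neighbouring entries in rows $0,1$ are zero). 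Row-stochasticity then reduces row $2$ to one free parameter $x_{2,0}=\frac{q}{1+\alpha}$ with $x_{2,1}=\frac{1+\alpha-q}{1+\alpha}$; requiring both coordinates to lie in the DP interval forces $q\in[\alpha,1]$, recovering exactly the family $Y$.

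The main obstacle is the uniqueness/parameterization step rather than the upper bound itself: one must carefully chase equality cases to verify that no row-sum slack can be spent on extraneous outputs, that the DP inequalities in columns $0$ and $1$ are tight in rows $0$ and $1$, and that intersecting the admissible interval for $x_{2,0}$ with the one for $x_{2,1}=1-x_{2,0}$ leaves exactly the range $q\in[\alpha,1]$ and no spurious degrees of freedom. As in the Bayesian case, a small-perturbation argument shows that these two consumers are not isolated: nearby risk-averse consumers (with slightly enlarged $S$, or losses close to $\ell_{bin}$) inherit the same unique optimal mechanisms, so the two optimal-mechanism shapes cannot be co-derived from a common universal mechanism.
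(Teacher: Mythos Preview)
Your proof is correct and follows essentially the same approach as the paper: both choose the consumers $(S=\{0,1,2\},\ell_{bin})$ and $(S=\{0,1\},\ell_{bin})$, combine the $\alpha$-DP constraints with row-stochasticity to bound the minimax objective, and then trace equality cases to pin down $X$ uniquely and $Y$ up to the one-parameter family. The only cosmetic difference is that the paper argues uniqueness of $X$ by contradiction (if $X'\ne X$ then some off-diagonal entry is strictly below $\alpha/(1+2\alpha)$, forcing a diagonal entry below $1/(1+2\alpha)$), whereas you chase the equality cases directly; the content is identical.
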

\begin{proof}
Consider an information consumer whose loss function is $\ell_{bin}$ who knows the support of the query is $S=\{0,1,2\}$. As in the previous observation, the support of any optimal mechanism for this consumer must be a subset of $\{0,1,2\}$. Notice that if the consumer uses the mechanism described by $X$ then her maximal expected loss is $\frac{2\alpha}{1+2\alpha}$. 

Assume for a contradiction that the consumer has another mechanism $X'$ with maximal expected loss at most $\frac{2\alpha}{1+2\alpha}$. I.e., 
\begin{equation} \label{eq:lossAltMech}
	\max \{x'_{0,1}+x'_{0,2}, x'_{1,0}+x'_{1,2}, x'_{2,0}+x'_{2,1}\} \le \frac{2\alpha}{(1+2\alpha)}.
\end{equation} 
Since $X' \not= X$, Equation~(\ref{eq:lossAltMech}) implies that $x'_{i,j}<\frac{\alpha}{1+2\alpha}$ for some
 $i\ne j$. Taking into account that $X'$ is $\alpha$-differentially private we get 
 $x'_{j,j} \leq \frac{1}{\alpha} \cdot x'_{i,j} < \frac{1}{1+2\alpha}$, 
 and hence the maximal expected loss is at least $\sum_{i\not=j} x'_{i,j} = 1-x'_{j,j} > 1-\frac{1}{1+2\alpha} = \frac{2\alpha}{1+2\alpha}$, in contradiction to the assumption that this mechanism is at least as good as $X$ for this information consumer.

A similar proof shows that mechanisms of the form $Y$ are the only mechanisms optimal for an information consumer with auxiliary knowledge of the support $S=\{0,1\}$ and loss function $\ell_{bin}$. As in the previous observation, the mechanisms described are also the only optimal mechanisms for a variety of other information consumers.
\end{proof}

We will now use these two observations to show that in both models no universally optimal mechanism $U$ exists. (This is true even if we allow $U$ to have a non-discrete range.)

\begin{myclaim}
\label{clm:impossibility_sumquery}
No $\alpha$-differentially private mechanism can derive both $X$ and an instance of $Y$.
\end{myclaim}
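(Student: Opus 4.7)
The plan is to suppose for contradiction that some $\alpha$-differentially private mechanism $U$ derives both $X$ and an instance of $Y$, and to push the tight privacy constraints of $X$ and $Y$ back to pointwise constraints on the columns of $U$, ultimately contradicting row-stochasticity. Working in the finite-range case, write $X = U T_X$ and $Y = U T_Y$ with row-stochastic remappings $T_X = (s_{k,j})$ and $T_Y = (t_{k,j})$. Since $n=1$ and $\D=\{0,1,2\}$, every pair of inputs corresponds to neighboring databases, so pairwise $\alpha$-DP constraints apply between all three rows. The tight constraints in $X$ are therefore $x_{i,j}=\alpha x_{j,j}$ for all $i\neq j$; the tight constraints in $Y$ are $y_{1,0}=\alpha y_{0,0}$ and $y_{0,1}=\alpha y_{1,1}$; and the third column of $Y$ is identically zero, which forces $t_{k,2}=0$ for every $k$ whose $U$-column is nonzero.

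The key lifting step is the following: whenever $x_{i,j}=\alpha x_{i',j}$ is tight, the identity $\sum_k (u_{i,k}-\alpha u_{i',k})\,s_{k,j} = 0$ is a sum of nonnegative terms (each factor $u_{i,k}-\alpha u_{i',k}\ge 0$ by $\alpha$-DP of $U$, and $s_{k,j}\ge 0$), so every term must vanish. Hence $u_{i,k}=\alpha u_{i',k}$ holds pointwise for every $k$ with $s_{k,j}>0$, and analogously for the tight constraints in $Y$. Next, I would classify each nonzero column of $U$ by the pair $(j_X,j_Y)$ of columns it contributes to in $X$ and $Y$. A column splitting its $T_X$-mass between columns $0$ and $1$ of $X$ combines $u_{1,k}=\alpha u_{0,k}$ with $u_{0,k}=\alpha u_{1,k}$, forcing $u_{0,k}=\alpha^2 u_{0,k}$ and hence a zero column; the same argument kills every other split across columns of $X$ and every split across columns of $Y$. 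So each surviving column has a unique type $(j_X,j_Y)\in\{0,1,2\}\times\{0,1\}$.

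A case check on the six type pairings shows that four of them again combine contradictory tightness relations and collapse to zero; only the types $(0,0)$ and $(1,1)$ survive, giving columns proportional to $(1,\alpha,\alpha)^\top$ and $(\alpha,1,\alpha)^\top$ respectively. Letting $a$ and $b$ denote the total top-entry masses of these two surviving types, row-stochasticity of $U$ yields the three equations
\begin{equation*}
a+\alpha b = 1, \qquad \alpha a + b = 1, \qquad \alpha a + \alpha b = 1.
\end{equation*}
The first two force $a=b=1/(1+\alpha)$, and substituting into the third gives $2\alpha/(1+\alpha)=1$, i.e.\ $\alpha=1$, contradicting $\alpha\in(0,1)$.

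The delicate point of this plan is the structural classification: I must carefully verify that tight constraints on the derived mechanisms really lift to pointwise equalities on $U$'s columns, that $T_X$- and $T_Y$-splits are killed in each case, and that exactly two of the six type pairings survive. Once that is in place the final contradiction is a one-line computation. The argument adapts verbatim to continuous-range $U$ by replacing the sums over $k$ with integrals against the output measure, using that the integrand in the lifted identity is nonnegative a.e.
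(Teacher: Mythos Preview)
Your proof is correct and takes a somewhat different route from the paper's. The paper invokes the Kifer--Lin characterization (Theorem~\ref{thm:maximal_characterization}): since every column graph of $X$ is connected, $X$ is maximally general, so $U\preceq_G X$ and hence $Y=XT$ for some remapping $T$; the contradiction is then immediate, because the third column of $X$, proportional to $(\alpha,\alpha,1)^T$, satisfies neither of the tight constraints required of columns $0$ and $1$ of $Y$, yet $T$ must send it to one of them with positive weight.

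You bypass the Kifer--Lin black box and work directly with $U$, lifting the tight constraints of both $X$ and $Y$ to pointwise equalities on $U$'s columns and then classifying those columns by the pair $(j_X,j_Y)$. This is more elementary and self-contained; in effect you reprove the relevant instance of maximal generality inside the argument. Your final contradiction via row-stochasticity of $U$ is valid, but a quicker finish is available once you show that types $(2,0)$ and $(2,1)$ collapse: no nonzero column of $U$ then contributes to column~$2$ of $X$, forcing $x_{2,2}=0$, already a contradiction --- and this is essentially the paper's endgame. One minor wording slip: by ``top-entry masses'' you really mean the scaling coefficients $c_k$ when the surviving columns are written as $c_k(1,\alpha,\alpha)^T$ and $c_k(\alpha,1,\alpha)^T$; with that reading your three equations and the conclusion $\alpha=1$ are correct.
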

\begin{proof}
Assume for a contradiction that such a mechanism $U$ exists, so $X$ and some instance of $Y$ are both derivable from $U$. For simplicity we refer to this instance as $Y$. By Theorem \ref{thm:maximal_characterization}, $X$ is a maximally general mechanism. Therefore $U \preceq _G X$, and hence $Y \preceq _G X$, i.e.,  there exists a random remapping $T$ such that $Y = XT$.
Denote by $x_j$ the $j^\text{th}$ column of $X$, and by $y_k$ the $k^\text{th}$ column of $Y$. We get that
\begin{equation*}
	y_k = t_{0,k} \cdot x_0 + t_{1,k} \cdot x_1 + t_{2,k} \cdot x_2 , \qquad \forall k \in \{0,1,2\}
\end{equation*}
Note that some $\alpha$-differentially privacy constraints in $Y$ are tight. Specifically, $y_{1,0} = \alpha y_{0,0}$ and $y_{0,1} = \alpha y_{1,1}$. As $Y$'s columns are non-negative linear combinations of $X$'s columns, such a tight constraint in a column of $Y$ appears only if this column is a linear combination of columns of $X$ in which the same privacy constraints are also tight. Note that the first two entries of every column in $Y$ correspond to a tight constraint. But since $x_{0,2}=x_{1,2}>0$, mapping this column of $X$ by $T$ to any column of $Y$ (even with just a positive probability), yields a mechanism with a column in which the first two entries do not correspond to a tight constraint. Therefore, a contradiction.
\end{proof}

\subsection{Generalizing the Impossibility Result for Sum Queries}
\label{sec:sum_generalizations}

So far we have shown the following: if $n$, the number of records in the database, is $1$, and the range of values is $0,\ldots,m$ where $m=2$, then no universal private mechanism for sum queries yields optimal utility for all consumers. Next, we generalize these impossibility results to the case $m \geq 2$ (and $n=1$), and later present also the case where $n>1$. Hence, we will conclude the following theorem:
\begin{mytheorem}
\label{thm:sum_queries}
No universally optimal mechanism exists for sum queries for databases whose records take values in the set $\{0,1,\ldots ,m\}$ where $ m \geq 2$. This holds both for the Bayesian and the risk-averse utility models.
\end{mytheorem}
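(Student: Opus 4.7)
The plan is to reduce the general impossibility to the argument of Claim \ref{clm:impossibility_sumquery} by isolating the same ``triangle'' structure on a well-chosen three-element subset of the sum query's range. Since $m \geq 2$, the values $0, 1, 2$ lie in the range $\{0, 1, \ldots, mn\}$ of the sum query for every $n \geq 1$, and any two of them are neighboring (one can change a single database record from $0$ to $1$ or from $0$ to $2$). Thus the privacy constraint subgraph induced on $\{0, 1, 2\}$ is the same complete triangle that appeared in the $n = 1$, $m = 2$ case, and all the local constraints that drove the previous contradiction remain in place.

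I would first handle $n = 1$, $m \geq 2$. Pick consumers whose optima force the same patterns $X$ and $Y$ on the rows indexed by $\{0, 1, 2\}$ and $\{0, 1\}$ respectively: in the risk-averse model, take $S = \{0, 1, 2\}$ (resp.\ $S = \{0, 1\}$) with $\ell_{bin}$; in the Bayesian model, take a prior uniform on $\{0, 1, 2\}$ (resp.\ on $\{0, 1\}$) with $\ell_{bin}$. An extension of the counting argument of Observations \ref{obsrv:Bayesian}--\ref{obsrv:risk-averse} restricted to rows $\{0, 1, 2\}$ -- combining row-$i$ stochasticity with $x_{i, j} \geq \alpha x_{j, j}$ for each $j \neq i$ in $\{0, 1, 2\}$ -- gives $(1 + 2\alpha)(x_{0, 0} + x_{1, 1} + x_{2, 2}) \leq 3$, and tightness forces the $X$ block on columns $\{0, 1, 2\}$ of rows $\{0, 1, 2\}$ together with zeros on those rows in all other columns (an analogous argument handles Consumer B on rows $\{0, 1\}$). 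I would then fix one particular extension $\widehat{X}$ of Consumer A's optimum to rows $r \geq 3$ (e.g., $x_{r, 0} = x_{r, 1} = \alpha/(1 + 2\alpha)$ and $x_{r, 2} = 1/(1 + 2\alpha)$, with all other entries zero) and verify via Theorem \ref{thm:maximal_characterization} that $\widehat{X}$ is maximally general: each of the three nonzero columns has a connected column graph (the tight-constraint pattern links every row $r \geq 3$ to at least one row in $\{0, 1, 2\}$), while the all-zero columns have trivially connected column graphs.

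The contradiction of Claim \ref{clm:impossibility_sumquery} then replays almost verbatim. If a universal $U$ derives both $\widehat{X}$ and some Consumer B optimum $\widehat{Y}$, then $U \preceq_G \widehat{X}$, so $\widehat{Y} = T \widehat{X}$ for a row-stochastic $T$. The tight constraint $y_{1, 0} = \alpha y_{0, 0}$ rearranges to $\sum_j t_{j, 0}(x_{1, j} - \alpha x_{0, j}) = 0$; each summand is nonnegative and vanishes only for $j = 0$ or for $j \geq 3$ (where the column is zero on rows $\{0, 1, 2\}$), forcing $t_{1, 0} = t_{2, 0} = 0$, and symmetrically $t_{0, 1} = t_{2, 1} = 0$. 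Then $y_{0, 0} = t_{0, 0}/(1 + 2\alpha) = 1/(1 + \alpha)$ demands $t_{0, 0} = (1 + 2\alpha)/(1 + \alpha) > 1$, contradicting row-stochasticity. For $n > 1$, the range of the sum query grows to $\{0, \ldots, mn\}$, but the same local argument on rows and columns $\{0, 1, 2\}$ goes through unchanged for oblivious mechanisms; for non-oblivious universal mechanisms I would invoke the reduction discussed in Section \ref{sec:m_gt_2}, which following \cite{GRS09} for the Bayesian case and \cite{GS10} for the risk-averse case shows that any non-oblivious mechanism can be replaced by an oblivious one without degrading any consumer's utility, so the oblivious impossibility transfers.

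The main obstacle I anticipate is the bookkeeping in the construction of $\widehat{X}$: the extension to rows $r \geq 3$ must simultaneously preserve $\alpha$-DP (both between $\{0, 1, 2\}$ and $\{3, \ldots, m\}$ and within $\{3, \ldots, m\}$), not spoil Consumer A's optimality, and connect every column graph so that Theorem \ref{thm:maximal_characterization} applies. The explicit choice above works, but verifying all three conditions in closed form is the most delicate step.
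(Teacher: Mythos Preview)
Your overall strategy (exhibit two consumers whose optimal mechanisms cannot be derived from a common $U$) matches the paper, but the execution has a real gap.

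The problem is in the step ``If a universal $U$ derives both $\widehat{X}$ and some Consumer~B optimum $\widehat{Y}$, then $U \preceq_G \widehat{X}$.'' Universal optimality only guarantees that $U$ derives \emph{some} optimal mechanism $X'$ for Consumer~A, not your particular $\widehat{X}$. You yourself established that only the rows $\{0,1,2\}$ of $X'$ are forced; rows $r\ge 3$ are free (subject to DP). So Consumer~A has infinitely many optima, and $U$ may derive an $X'\neq\widehat{X}$. Maximal generality of $\widehat{X}$ then tells you nothing about $U$: it says that anything deriving $\widehat{X}$ is derived by $\widehat{X}$, not that anything deriving \emph{some other} optimum is derived by $\widehat{X}$. (Concretely, the optimum $X'$ with row $r=(1/3,1/3,1/3,0,\ldots)$ for $r\ge 3$ is not derivable from your $\widehat{X}$, since every column of $\widehat{X}T$ inherits $\widehat{X}$'s fixed values on rows $r\ge 3$.) For $n>1$ the situation is worse: with your extension the column graphs of $\widehat{X}$ are not even connected, because rows $r>m$ are not neighbors of row $0$, and all rows $r\ge 1$ share the same value in column $0$, so no tight constraint links them.

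The paper sidesteps this in two complementary ways. For $n=1$ it takes Consumer~A uniform on the \emph{full} range $\{0,\ldots,m\}$, so the optimal $(m{+}1)\times(m{+}1)$ mechanism $X$ is \emph{unique} and maximally general; then $U$ must derive this $X$, and Claim~\ref{clm:impossibility_sumquery}'s argument applies directly. For $n>1$ (and for non-oblivious $U$) it does not try to build a globally maximally general extension at all: it \emph{restricts} $U$, $X'$, $Y'$ to the inputs $\{0,\ldots,m\}$ (respectively to the pairwise-neighboring databases $(0,\ldots,0,q)$), obtaining a mechanism on an input space where the optimal block \emph{is} unique, and then invokes the $n=1$ impossibility. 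Your line ``the same local argument on rows and columns $\{0,1,2\}$ goes through'' could be read as exactly this restriction argument --- and if you make that reading explicit, your proof becomes correct and in fact slightly simpler than the paper's (you restrict to three rows instead of $m{+}1$). But as written the argument routes through the unjustified derivation of the global $\widehat{X}$. Also, your description of the non-oblivious reduction is off: the paper does not replace a non-oblivious $U^*$ by an oblivious one via \cite{GRS09,GS10}; it restricts $U^*$ to a family of pairwise-neighboring databases, which is a different (and necessary) maneuver.
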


\subsubsection{Generalizing the Sum Query Impossibility Result to $m > 2$}
\label{sec:m_gt_2}

Consider the case where the database consists of one record, and the possible values in this record are $0$ to $m$. Let 
\begin{equation} \label{eq:mechanismsXY}
       X = \frac{1}{1+m \alpha} \cdot 
	\begin{bmatrix}
		1 & \alpha & \alpha & \cdots & \alpha \\
		\alpha & 1 &\alpha  & \cdots & \alpha \\
		\alpha & \alpha & 1 & \cdots & \alpha \\
		\vdots & \vdots & & \ddots & \vdots \\
		\alpha & \alpha &\alpha &\cdots &1
	\end{bmatrix};\;
	Y = \frac{1}{1+(m-1) \alpha} \cdot 
	\begin{bmatrix}
		1 & \alpha & \alpha & \cdots & \alpha & 0\\
		\alpha & 1 &\alpha  & \cdots & \alpha & 0\\
		\alpha & \alpha & 1 & \cdots & \alpha & 0\\
		\vdots & \vdots & & \ddots & \vdots \\
		\alpha & \alpha &\alpha &\cdots & 1 & 0\\
		q_1 & q_2 & q_3 & \cdots & q_{m} & 0
	\end{bmatrix},
\end{equation}
where $\alpha \leq q_i \leq 1$ and $\sum_{i=1}^m q_i =1+(m-1)\alpha$.
Similar arguments to those used for the case $m=2$ show that $X$ is the unique optimal mechanism for an information consumer with loss function $\ell_{bin}$ and prior $p_0=p_1=\cdots = p_m = \frac{1}{m+1}$ in the Bayesian utility model and for an information consumer with support $S=\{0,1,\ldots ,m\}$ in the risk-averse utility model. Also, mechanisms of the form $Y$ are the only optimal mechanisms for the information consumers with loss function $\ell_{bin}$ and prior $p_0=p_1=\cdots = p_{m-1} = \frac{1}{m},\, p_m=0$ in the Bayesian model, and for an information consumer with support $S=\{0,1,\ldots ,m-1\}$ in the risk-averse model. Once again, these mechanisms are also the only optimal private mechanisms for a variety of other consumers as well. Using the same arguments to those in the proof of Claim \ref{clm:impossibility_sumquery}, it follows that $X$ and $Y$ are not derivable from one single mechanism.

\subsubsection{Generalizing the Sum Query Impossibility Result to $n>1$}
\label{sec:n_gt_1}

Now consider the case where the number of records in the database is larger than $1$. We first prove the impossibility of an {\em oblivious} universally optimal mechanism. Consider two consumers with loss function $\ell_{bin}$. The first consumer believes that the result of the sum query is bounded by $m$ (in the Bayesian case, the consumer holds a uniform prior over $\{0,\ldots,m\}$). No optimal mechanism for this consumer returns values larger than $m$, so in the mechanism matrix the columns corresponding to values greater than $m$ contain zeros. Refer to some optimal mechanism for this consumer as $X'$. Ignoring rows and columns of $X'$ that correspond to values greater than $m$,  the remaining entries exactly form the mechanism $X$ of Equation~(\ref{eq:mechanismsXY}). (Observe that such an extension of mechanism $X$ is indeed feasible, as any row which pertains to a value greater than $m$ can be identical to the row which pertains to the value $m$, and so the privacy constraints hold. Such a mechanism is also optimal, as the utility is a function of only the rows $\{0,1,\ldots,m\}$, due to the consumer's prior, so we cannot achieve a better utility than the utility gained by mechanism $X$). 
The second consumer believes that the query result cannot be larger than $m-1$  (in the Bayesian case, the consumer holds a uniform prior over $\{0,\ldots,m-1\}$). Refer to some optimal mechanism for this consumer as $Y'$. A similar argument shows, that ignoring rows and columns that pertains to values greater than $m$, the remaining entries match the mechanism $Y$ of Equation~(\ref{eq:mechanismsXY}). 
Assume for a contradiction that $X'$ and $Y'$ are both derivable from some mechanism $U'$. Therefore there exist remappings $T,S$ such that $X' = U'T$ and $Y'=U'S$. Let $U$ be the mechanism $U'$ reduced to only the inputs $\{0,1,\ldots,m\}$. Reducing $U'$ to $U$, we get that $X=UT$ and $Y=US$. According to the previous subsection these two mechanisms cannot be derived from a single oblivious mechanism, due to the same arguments in the proof of Claim \ref{clm:impossibility_sumquery}. Thus, a contradiction.

Now suppose for a contradiction that both the mechanisms are derived from a single {\em non}-oblivious mechanism $U^*$. This means that $U^*$'s input space corresponds to databases rather than to query results. Suppose there is a remapping $T$ such that $X^*=U^* T$. This means that the rows of $X^*$ correspond to databases as well. We assume that $X^*$ is oblivious (as universal optimality was shown not to exist even for count queries when consumers choose non-oblivious optimal mechanisms~\cite{GRS09}). Therefore, applying $U^*$ on two databases with the same query result and then applying T on $U^*$'s output,
yields identical rows in $X^*$ (which is described as a single row in the oblivious matrix $X$ above). Note that although $X^*$'s input and output spaces are discrete (and so we can refer to $X^*$ as a matrix), we assume nothing on $U^*$'s outputs and $T$'s inputs.
Reducing $U^*$ to an input space of only $m+1$ databases with different query results and applying the remapping $T$ on this reduced mechanism's output, yields mechanism $X$ completely. 
Similarly, applying some remapping $S$ on the same reduced mechanism yields mechanism $Y$. 
Now reduce $U^*$ to inputs which are the databases $(0,0,\ldots,0,q)$ where $q$ is any possible record value. Refer to this mechanism as $U$. According to the assumptions, we get that $X=UT,Y=US$. Also note that every two possible inputs of $U$ are neighboring databases, and so $U$ must satisfy privacy constraints as any oblivious mechanism. Therefore, we get a simple reduction to the case of an oblivious mechanism $U$, and the same impossibility result applies also to the case of non-oblivious universal mechanism\footnote{Actually, this also shows that enabling universal non-oblivious mechanisms cannot resolve such impossibility for every query whenever there are 3 (or more) values which are the exact query results of 3 different neighboring databases.}. Thus, we conclude Theorem \ref{thm:sum_queries}

\subsection{Impossibility of Universally Optimal Mechanisms for Histogram Queries}

The previous subsection shows that no universally optimal mechanisms exist for sum queries. In this and the following sections we consider other generalizations of count queries. One natural generalization is to histogram queries, and another is to bundles of simultaneous count queries. We begin with histogram queries. Note that a count query may be thought of as a histogram query where the database records are partitioned into two categories: those which satisfy a predicate, and those which do not. Consider now a histogram query which partitions the database records into three categories or more. 

\begin{mytheorem}
\label{thm:histograms}
No universally optimal mechanism exists for histogram queries, except for histograms for one predicate and its complement or trivial predicates. This holds both for the Bayesian and the risk-averse utility models.
\end{mytheorem}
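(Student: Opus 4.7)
The plan is to reduce to the sum query impossibility by finding, inside the privacy-constraint graph of a nontrivial histogram query, a clique of three (or more) pairwise-neighboring outcomes that plays the role played by $\{0,1,2\}$ in Section~\ref{sec:sum_queries}. Suppose the histogram partitions records into $k\ge 3$ nontrivial categories. For any $n\ge 1$ consider the $k$ histogram values
\[
h_0 = (n,0,\ldots,0),\quad h_1=(n-1,1,0,\ldots,0),\quad h_2=(n-1,0,1,0,\ldots,0),\quad\ldots,\quad h_{k-1}=(n-1,0,\ldots,0,1).
\]
Every pair $h_i,h_j$ is realized by a pair of neighboring databases (differ in the assignment of one record), so $h_0,\ldots,h_{k-1}$ induce a clique in the privacy constraint graph. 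This is exactly the combinatorial structure that the $m+1$ outputs $\{0,\ldots,m\}$ induce in the single-record sum-query setting.

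First I would treat the Bayesian case, following the structure of Observation~\ref{obsrv:Bayesian} and Claim~\ref{clm:impossibility_sumquery}. Let $X$ be the $k\times k$ matrix with $\frac{1}{1+(k-1)\alpha}$ on the diagonal and $\frac{\alpha}{1+(k-1)\alpha}$ off the diagonal, indexed by $h_0,\ldots,h_{k-1}$, and let $Y$ be the analogue of the second mechanism in Equation~(\ref{eq:mechanismsXY}), with a zero column corresponding to $h_{k-1}$ and with the row corresponding to $h_{k-1}$ having arbitrary but valid entries $q_1,\ldots,q_{k-1}$. Consider a consumer with loss $\ell_{bin}$ whose prior puts mass $1/k$ on each $h_i$; repeating the averaging-over-tight-privacy-constraints argument of Observation~\ref{obsrv:Bayesian} shows that $X$ is the unique optimal mechanism for this consumer on the induced rows. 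Similarly, a consumer with $\ell_{bin}$ and uniform prior on $h_0,\ldots,h_{k-2}$ forces a mechanism of form $Y$. The remaining rows of either mechanism (corresponding to histograms outside the clique) can be filled in arbitrarily subject to the privacy constraints without affecting the consumer's utility, exactly as in Section~\ref{sec:n_gt_1}, so these rows do not break uniqueness of the optimum on the $k$ clique rows.

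Then I would apply Claim~\ref{clm:impossibility_sumquery} essentially verbatim. By Theorem~\ref{thm:maximal_characterization} the restriction of $X$ to the clique rows is maximally general (its column graph is the full clique, hence connected), so if both $X$ and some $Y$ were derivable from a common $U$, then $Y$ would be derivable from $X$ by a remapping $T$. In $Y$ the privacy constraint on $h_0,h_1$ in column $0$ is tight, whereas in every column of $X$ the pair $(h_0,h_1)$ is in a nontrivial mixture with all the other rows (all entries of each column are equal to $\alpha/(1+(k-1)\alpha)$ except the diagonal), so taking any nonnegative combination of $X$'s columns cannot produce a column in which the $h_0,h_1$ constraint is tight unless the combination is supported on columns in which that constraint is tight, contradicting the structure of $X$. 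The risk-averse case follows the same template with Observation~\ref{obsrv:risk-averse} replacing Observation~\ref{obsrv:Bayesian}: the consumer with support $\{h_0,\ldots,h_{k-1}\}$ forces $X$ and the consumer with support $\{h_0,\ldots,h_{k-2}\}$ forces a mechanism of form $Y$.

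Finally I would remove the obliviousness restriction on $U$ using the reduction in Section~\ref{sec:n_gt_1}: restrict $U^*$ to the $k$ databases realizing the histograms $h_0,\ldots,h_{k-1}$, observe that each pair of these is neighboring, and reduce to the oblivious impossibility just proved. The main obstacle I anticipate is bookkeeping rather than ideas: carefully verifying that, even though the full histogram mechanism has many more rows and columns than the clique of size $k$, (i) the consumer priors/supports that we use genuinely have $X$ (respectively $Y$) as their unique optimal mechanism on the clique rows, and (ii) the ``tight-constraint-comes-from-tight-constraint'' argument is still valid when the witnessing $U$ is allowed to have an arbitrary (possibly continuous) output range, which is handled exactly as in Claim~\ref{clm:impossibility_sumquery} by integrating over $T$ and using nonnegativity.
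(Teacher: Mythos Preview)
Your proposal is correct and follows the paper's approach: identify a clique of $k\ge 3$ pairwise-neighboring histogram outcomes (the paper first treats $n=1$, where the entire output space is this clique, and then for $n>1$ uses exactly the outcomes $h_0,\ldots,h_{k-1}$ you write down), and reduce to the sum-query impossibility of Sections~\ref{sec:sum_queries}--\ref{sec:sum_generalizations}, handling non-oblivious $U$ via the restriction argument of Section~\ref{sec:n_gt_1}. One small correction to your tight-constraint sketch: for $k>3$ the $(h_0,h_1)$ constraint is \emph{not} tight in columns $2,\ldots,k-2$ of $Y$, so the contradiction should instead be that column $k{-}1$ of $X$ (all entries $\alpha$ except the last) has no tight constraint among rows $0,\ldots,k{-}2$, whereas every nonzero column $j$ of $Y$ has all constraints $y_{i,j}=\alpha\,y_{j,j}$ (for $i<k{-}1$, $i\ne j$) tight---this is precisely the ``same argument as Claim~\ref{clm:impossibility_sumquery}'' that the paper itself invokes without spelling out in Section~\ref{sec:m_gt_2}.
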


\begin{proof}
Once again, consider first the case where there is only one record in the database, and the query is for a histogram which partitions the possible records into three categories. The only possible results for such a query are $\left(1,0,0\right)$, $\left(0,1,0\right)$ and $\left(0,0,1\right)$. Notice that all these histograms result from neighboring databases. Now consider information consumers whose loss function is either $\ell_{bin}$ or $\ell_1$ (in the case of a histogram over one record they both result with $0$ if the output matches the exact result and a constant otherwise). Refer to the first possible result as $0$, the second possible result as $1$, and the third possible result as $2$. Notice now that we have exactly the same constraints for valid mechanisms as we had for the sum query with just one record. Also, the utility expression for each of the consumers is the same. The problem of universally optimizing the utility for all $\ell_{bin}$ information consumers (or $\ell_1$ consumers) is now reduced to the same problem for sum queries. According to Subsection \ref{sec:sum_queries}, universally optimizing the utility for all such consumers is impossible, and so it is impossible to construct a mechanism for this specific case as well.

We now generalize this result for histograms over larger databases and partitions of any number of categories larger than $2$. First, consider the case of querying one record for a histogram of $c \geq 3$ categories. This can easily be reduced to problems we have already answered negatively. One way is to notice that as in the case where $c=3$ (in which we reduced this problem to the problem of sum queries where the records' values bound is $m=2$), larger values of $c$ can easily be reduced to sum queries with larger bounds on the records' values $m$. For every number of $c$ partitions, there are exactly $c$ possible results for the histogram over one record. They are all the results of neighboring databases. Refer to these results as $0,1,2,\ldots,c-1$. Again, this is exactly like constructing a universally optimal mechanism for sum queries over one record, in which the bound on its values is $m=c-1$. This is impossible as was shown in Subsection \ref{sec:m_gt_2}. Another way to be convinced is to refer to the partitions as $A_1, A_2, \ldots, A_c$. Now consider only consumers whose loss functions are depend on  the number of records in $A_1, A_2, (A_3 \cup A_4 \cup \ldots \cup A_c)$. These loss functions are monotone. This reduces the current problem to the problem of histograms over a partition of only $3$ categories, to which we already proved negative results.

We now generalize this result further to any size of the database. The same argument that was applied in Subsection \ref{sec:n_gt_1} for sum queries, applies here as well (even for the case of non-oblivious universal mechanisms). Consider only consumers with a prior such that all records except perhaps one fall into one specific category of the histogram. Querying for a histogram on such a database reduces to the result of the same histogram over one record only. Even if we consider only these consumers, we know that no one single mechanism can optimize their utilities over all possible mechanisms. Since there is no such mechanism that optimizes these consumers' utilities, there is obviously no mechanism that yields optimized results for all possible consumers. Therefore, even for larger databases, there is no universally optimal private mechanism for histogram queries.
\end{proof}

\subsection{Impossibility of Universally Optimal Mechanisms for Bundles of Count Queries}
\label{sec:bundles}

We now consider the generalization of single count queries to a bundle of count queries, where a bundle contains several simple (non trivial) count queries that need to be answered simultaneously. 
Note that a consumer's disutility for a bundle query need not be the sum of the losses for the separate basic queries -- it may be a more involved function of the bundle outputs. For instance, a consumer with the $\ell_{bin}$ loss function has no loss if all the results he uses are correct, and has one unit of loss if one or more of the results he uses are incorrect, no matter how many. Furthermore, information consumers may have auxiliary knowledge about the dependency between bundle outputs. 

\begin{mytheorem}
\label{thm:bundles}
No universally optimal mechanism exists for bundles of more than one simultaneous non-trivial count queries. This holds both for the Bayesian and the risk-averse utility models.
\end{mytheorem}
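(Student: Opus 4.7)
The plan is to reduce Theorem~\ref{thm:bundles} to the impossibility already established for histograms (Theorem~\ref{thm:histograms}), which was itself reduced in Subsection~\ref{sec:sum_queries} to the basic sum-query impossibility. The overall strategy is to cut down the bundle problem to a one-record instance whose realized outputs form a pairwise-neighboring family of size at least three, at which point the pair of mechanisms $X,Y$ from Equation~(\ref{eq:mechanismsXY}) and the argument of Claim~\ref{clm:impossibility_sumquery} apply essentially verbatim.

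First I would reduce the database size to $n=1$, using exactly the trick of Subsection~\ref{sec:n_gt_1}. Restrict attention to Bayesian priors (respectively risk-averse supports) concentrated on databases in which $n-1$ of the records are frozen to a single common value in $\D$, so that only one record varies. Any optimal oblivious mechanism for such a consumer is fully determined by the rows indexed by these databases, and these databases are pairwise neighboring, so the induced subproblem is identical to the $n=1$ bundle problem. The same reduction handles non-oblivious universal mechanisms, as in Subsection~\ref{sec:n_gt_1}.

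Second, I examine an $n=1$ bundle $f=(f_1,\ldots,f_k)$ with $k\geq 2$ non-trivial count queries induced by predicates $\pi_1,\ldots,\pi_k$. The output $f(\{d\})$ is a $0/1$-vector in $\{0,1\}^k$; let $V\subseteq\{0,1\}^k$ be the set of realized outputs. If $|V|\leq 2$ then the bundle carries at most one bit of information and (after deterministic relabeling) reduces to a single count query, contradicting the reading of the hypothesis as a collection of genuinely distinct non-equivalent count queries; thus $|V|\geq 3$. Because $n=1$, any two distinct single-record databases are neighbors, so every pair of vectors in $V$ arises from neighboring databases and the privacy-constraint graph on $V$ is the complete graph on $|V|$ vertices.

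Third, relabelling $V$ as $\{0,1,\ldots,|V|-1\}$, the feasibility polytope of oblivious $\alpha$-differentially private mechanisms is identical to the one analyzed in Subsection~\ref{sec:m_gt_2} for sum queries with record values in $\{0,1,\ldots,|V|-1\}$. I then re-use the two information consumers from Observations~\ref{obsrv:Bayesian} and~\ref{obsrv:risk-averse}: one consumer whose unique optimal mechanism is the symmetric $X$ of Equation~(\ref{eq:mechanismsXY}), and one whose optimal mechanisms all have the form $Y$ of Equation~(\ref{eq:mechanismsXY}) with a zero column. Both consumers are realizable in the Bayesian model (uniform priors on the appropriate subset of $V$ together with $\ell_{bin}$) and in the risk-averse model (with supports on the appropriate subset of $V$), so the argument carries over to both utility frameworks. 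Finally, Claim~\ref{clm:impossibility_sumquery} applies without change: Theorem~\ref{thm:maximal_characterization} certifies that $X$ is maximally general (all of its column graphs on $V$ are complete, hence connected), so any purported universal mechanism $U$ co-deriving $X$ and $Y$ forces $Y=XT$ for some row-stochastic $T$; but the tight privacy constraint between the first two rows of $Y$ cannot be obtained from any non-negative combination of $X$'s columns, a contradiction.

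The main obstacle is the second step, namely a clean justification that a bundle of $k\geq 2$ truly distinct non-trivial count queries realizes at least three distinct single-record outputs. The degenerate cases to dispose of are $\pi_j\equiv\pi_i$ and $\pi_j\equiv\lnot\pi_i$, both of which genuinely collapse the bundle into a single count query and must be excluded under the theorem's informal reading. Once these are ruled out some pair $\pi_i,\pi_j$ is logically independent on $\D$, so at least three of the four joint truth-assignments are realized in $V$, giving $|V|\geq 3$ as needed and completing the reduction.
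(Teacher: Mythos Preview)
Your proposal is correct and follows essentially the same route as the paper: locate at least three bundle outputs that arise from pairwise-neighboring databases, relabel them, and invoke the $X$/$Y$ pair together with Claim~\ref{clm:impossibility_sumquery}. The paper is marginally more direct: it skips your explicit $n=1$ reduction and simply observes that two (or more) non-trivial, non-equivalent predicates partition $\D$ into at least three cells, so varying a single record among three cells (with the other $n-1$ records fixed) already yields three pairwise-neighboring databases with three distinct bundle outputs for any $n$; it then names these outputs $0,1,2$ and reduces to the basic $m=2$ case of Subsection~\ref{sec:sum_queries} rather than to the $m=|V|-1$ version of Equation~(\ref{eq:mechanismsXY}) that you use. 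Your detour through $n=1$ and your use of the full $|V|$-sized $X,Y$ are both sound but not needed; the paper's shortcut buys brevity, while your version makes the degeneracy analysis ($\pi_j\equiv\pi_i$ or $\pi_j\equiv\lnot\pi_i$) more explicit than the paper does.
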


\begin{proof}
Such a generalization of count queries proves to be no different than the other intuitive generalizations we have already discussed. Note that two simultaneous non-trivial different predicates actually partition the records domain into $4$ categories: those which satisfy both predicates, those which satisfy none of them, and those which satisfy just the first or just the second. 
If the predicates are somehow related, then the predicates might partition the domain into only $3$ categories. This may happen in various cases, namely if one of the predicates is a subset of the other, if no record can possibly satisfy both of the predicates, or if any possible record must satisfy at least one of the predicates.
Either way, there are always three different outputs for such bundles which result from three neighboring databases. (This is of course true also if the bundle consists of more than two simultaneous count queries). Once more, consider two different information consumers. The first has the $\ell_{bin}$ loss function and a uniform prior over these three outputs (Resp.~in the risk-averse model, her support is the set of these three outputs). The second consumer also uses the $\ell_{bin}$ loss function and has a uniform prior over two of these outputs. (Resp.~in the risk-averse model, her support is a set of two of these three outputs). Name these different outputs $0$, $1$ and $2$. As in the previous subsection, the problem of universally optimizing the utility for all $\ell_{bin}$ information consumers is now reduced to the same problem presented in Subsection \ref{sec:sum_queries} . (The constraints for valid mechanisms are the same, and the utility expression for each of the consumers is the same). The only optimal mechanisms for the chosen information consumers are the same as those in Observations \ref{obsrv:Bayesian} and \ref{obsrv:risk-averse}. According to Claim \ref{clm:impossibility_sumquery}, such mechanisms are not derivable from any single private mechanism, and so universally optimizing the utility for all such consumers is impossible in the queries bundles as well.

\remove{
Consider first the case where the predicates partition the domain into only $3$ categories. In that case determining the results of the two count queries also determines the histogram values over the specified partitions (the size of the database is known), and vice versa. Therefore, any prior of an information consumer over the results of these two queries can be immediately translated into prior over the matching histogram, and vice versa. We focus on information consumers with the $\ell_{bin}$ loss function, which means the consumer pays no penalty if she uses the correct results, and pays a constant penalty if using wrong results. This loss function is defined the same way in every scenario. So, in that case, all the parameters of the consumer (prior, loss function and target function) have a simple translation from the histogram scenario to the bundle of queries scenario, and vice versa. In fact, these are the query results and the histogram are just two different ways of expressing the exact same information, and so, every mechanism in one scenario can be translated into a mechanism in the other scenario. Since we have already seen in the previous subsection that there are two information consumers with the $\ell_{bin}$ loss function who cannot be optimally satisfied by a single private mechanism which answers the given queries, then there is no universally optimal mechanism for answering this bundle of count queries as well.

Now consider the case where the predicates partition the domain into $4$ categories. The proof will be quite the same. Notice, though, that knowing the histogram values implies knowing the queries results, but not necessarily the other way around. Therefore, we will explicitly stipulate two information consumers who cannot be optimally satisfied by a single mechanism. Consider the 3 cases where the count query results are both $0$, the first is $0$ and the second is $1$, the first is $1$ and the second is $0$. All these 3 possible results ......

If one of the predicates is a subset of the other, then the predicates partition the domain into only $3$ categories. In each case, the reply for such two count queries is the same as the reply for a histogram query with $3$ or $4$ categories. We have already shown that no universally optimal mechanism exists for histogram queries (which are not identical to a single count query). Therefore, no universally optimal mechanism exists for a bundle of two count queries, as well. The same argument shows that no universally optimal mechanism exists for a bundle of any number (larger than 1) of different non-trivial count queries.
}
\end{proof}


\section{A Characterization of Universal Optimality in the Bayesian Setting} 
\label{sec:characterization}

We now discuss a more general setting, where a query (not necessarily related to sum or count) is answered by a differentially private mechanism in the Bayesian utility model. We follow other works on this subject and only consider oblivious private mechanisms.
Note that although our results do not exclude the possibility of non-oblivious differentially private mechanisms, our techniques yield that no such non-oblivious universally optimal mechanisms exist for many natural functions. Specifically, enabling universal non-oblivious mechanisms cannot resolve such impossibilities for a query whenever there are 3 (or more) values which are the exact query results of 3 different neighboring databases. This is due to the same argument that was used in Subsection \ref{sec:n_gt_1}.

Let the database records be taken from a discrete domain $\D$ and let the query be $f:\D^n\rightarrow\R_f$ (wlog, we will assume that $f$ is a surjective function, in which case $\R_f = \{f(D): D\in\D^n\}$ is also a discrete set). Define the following  graph where edges correspond to answers $f$ may give on neighboring databases (and hence to restrictions on output distributions implied by differential privacy):
\begin{mydefinition}[Privacy Constraint Graph]
\label{def:graph}
Fix a query $f:\D^n\rightarrow\R_f$. The \emph{Privacy Constraint Graph} for $f$ is the undirected graph $G_f=(V,E)$ where $V=\R_f$ is the set of all possible query results and \mbox{$E = \{(f(D_1),f(D_2)) : D_1,D_2\in\D^n~\mbox{are neighboring}\}$}. The \emph{degree} of the constraint graph, $\Delta(G_f)$, is the maximum over its vertices' degrees. For $i_2,i_2\in\R_f$,  $G_f$ induces a distance metric $d_{G_f}(i_1,i_2)$ that equals the length of the shortest path in $G_f$ from $i_1$ to $i_2$.
\end{mydefinition}

Observe that the constraint graph is connected for any query $f$: If $i_1=f(D_1)$ and $i_2=f(D_2)$ then there is a sequence of neighboring databases starting with $D_1$ and ending in $D_2$, and hence a path from $i_1$ to $i_2$ in $G_f$. 

Recall that the results of \cite{GRS09,GS10} are restricted to  loss functions $\ell(i,r)$ that are monotonically non decreasing in the metric $|i-r|$. In our more general setting, we avoid interpreting outcome of $f$ as points of a specific metric space, and hence we only consider the $\ell_{bin}$ loss function, which would remain monotone under any imposed metric.

\paragraph{Outline of this Section.} We are now ready to describe the results of this section. Let $f$ be a query, and $G_f$ its constraint graph. We first show that if $G_f$ is a single cycle, then no universally optimal mechanism exists for $f$. This impossibility result is then extended to the case where $G_f$ contains a cycle.
\begin{mytheorem}
\label{thm:acyclic}
Fix a query $f:\D^n\rightarrow\R_f$, and let $G_f$ be its constraint graph. Consider Bayesian information consumers with loss function $\ell_{bin}$.  If $G_f$ contains a cycle then no universally optimal mechanism exists for these consumers.
\end{mytheorem}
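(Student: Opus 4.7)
The plan is to generalize Claim~\ref{clm:impossibility_sumquery} by exhibiting, from any cycle in $G_f$, two Bayesian consumers with loss $\ell_{bin}$ whose optimal mechanisms cannot both be derived from a single $\alpha$-differentially private mechanism. Let $C = (v_0, v_1, \ldots, v_{k-1}, v_0)$ be a shortest cycle in $G_f$, so that $C$ is induced and the only privacy constraints within $V(C)$ are the cycle edges themselves. I would take consumer~1 with uniform prior on $V(C)$ and consumer~2 with uniform prior on $V(C) \setminus \{v_0\}$, both with the $\ell_{bin}$ loss.

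First, via an LP analysis analogous to Observation~\ref{obsrv:Bayesian}, I would show that consumer~1's unique optimal mechanism $X$ has, on $V(C) \times V(C)$, the circulant form $x_{v_i, v_j} = \alpha^{d_C(v_i, v_j)}/Z_k$ (where $d_C$ is cycle distance and $Z_k$ is the row-sum normalization), with all columns outside $V(C)$ equal to zero. In $X$, every cycle edge is tight in both directions in every column, so by Theorem~\ref{thm:maximal_characterization} (each column graph is the full cycle $C$, hence connected), $X$ is maximally general. Therefore, if a universal mechanism $U$ derives both $X$ and any optimal $Y$ of consumer~2, then $U$ is equivalent to $X$ and $Y = XT$ for some row-stochastic $T$.

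Second, a similar LP analysis for consumer~2 would show that any optimal $Y$ has the column indexed by $v_0$ identically zero (placing any probability on output $v_0$ from rows $v_1,\ldots,v_{k-1}$ strictly decreases the objective), while the column of $Y$ indexed by the antipode $v_{\lfloor k/2 \rfloor}$ is forced into a rigid \emph{reflective} pattern about the removed vertex: rows at equal cycle-distance from $v_0$ along $C$ must receive equal values. This reflective pattern is incompatible with the \emph{rotational} pattern of each column of $X$ (in which a unique row attains the maximum, and entries strictly decay by factors of $\alpha$ moving away along the cycle).

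The main obstacle, and the final step, is converting this incompatibility into an arithmetic contradiction. Writing the antipodal column of $Y$ as a non-negative combination $\sum_i t_i c_i$ of $X$'s columns $c_i$, the reflective symmetry of $Y$'s column forces $t_i = t_{\sigma(i)}$ for each pair $(i, \sigma(i))$ related by the reflection $\sigma$ of the cycle fixing $v_0$. Imposing also the tight ratio $y_{0,\cdot}/y_{1,\cdot} = 1/\alpha$ (which comes from the privacy constraint on edge $v_0 v_1$ applied to $Y$'s antipodal column) then yields a linear equation forcing at least one paired coefficient to be negative, contradicting $T \geq 0$. For $k=3$ this reduces to Claim~\ref{clm:impossibility_sumquery}; for $k=4$ a direct calculation gives $\mu_0 = \mu_2$, $\mu_1 = \mu_3$, and $\mu_1 + \alpha \mu_0 = 0$ with $\mu_0 > 0$. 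For general $k$ I expect the same obstruction via Theorem~\ref{thm:maximal_characterization}: each column of $X$ has a unique sink in its column graph, so $X$'s columns span only rotationally patterned ratios on $C$, whereas $Y$'s antipodal column has two sinks (one on each side of the removed $v_0$) and therefore lies strictly outside the non-negative cone spanned by $X$'s columns.
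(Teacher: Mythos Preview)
Your outline shares the paper's overall strategy (two $\ell_{bin}$ consumers, pin down their optimal mechanisms, use maximal generality of $X$ to force $Y=XT$, then find a column of $X$ that cannot be mapped into $Y$), but your choice of the second consumer creates a real gap. With prior uniform on $V(C)\setminus\{v_0\}$ you are implicitly treating the optimal $Y$ as the path-geometric mechanism on $v_1,\ldots,v_{k-1}$, but for $k\ge 5$ that mechanism is \emph{infeasible} on the full cycle: in column $v_1$ the edge $v_0v_1$ forces $y_{v_0,v_1}\ge\alpha\,y_{v_1,v_1}$, while the edge $v_0v_{k-1}$ forces $y_{v_0,v_1}\le y_{v_{k-1},v_1}/\alpha=\alpha^{k-3}y_{v_1,v_1}$, and $\alpha>\alpha^{k-3}$ once $k\ge5$. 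So the true optimum for your consumer~2 is not the mechanism you analyze; your ``reflective pattern'' claim and the asserted tight ratio on row $v_0$ (which carries zero prior and is not pinned by the objective) are therefore unsupported. Your $k=3,4$ checks succeed only because there the path mechanism \emph{does} extend to row $v_0$.

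The paper sidesteps this by taking the second consumer's prior uniform on just three \emph{consecutive} cycle vertices $\{v_0,v_1,v_2\}$. The optimal $Y$ is then the $3\times3$ truncated geometric, which the paper shows \emph{does} extend to the whole cycle (and, via a short labeling argument, to all of $G_f$). The contradiction is then immediate: every nonzero column of $Y$ has, on rows $v_0,v_1,v_2$, one of the three tight patterns $(1,\alpha,\alpha^2)$, $(\alpha,1,\alpha)$, $(\alpha^2,\alpha,1)$, whereas the column of $X$ antipodal to $v_1$ has first three entries proportional to $(\alpha^{(m-1)/2},\alpha^{(m-1)/2},\alpha^{(m-1)/2-1})$ (odd $m$) or $(\alpha^{m/2-1},\alpha^{m/2},\alpha^{m/2-1})$ (even $m$), matching none of them; hence no row-stochastic $T$ can send it anywhere in $Y$. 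You also omit the passage from ``$G_f$ is a cycle'' to ``$G_f$ contains a cycle''; the paper handles this by labeling $G_f$ from a shortest cycle so that both consumers' optimal mechanisms provably extend to valid $\alpha$-DP mechanisms on all of $\R_f$, after which restricting to the cycle rows reduces to the pure-cycle case.
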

Constraint graphs of sum queries (for $m \geq 2$), histograms and bundles of queries all have cycles of length $3$, so, in the Bayesian utility model, Theorem~\ref{thm:acyclic} generalizes all our previous results.

Next, we consider the case where $G_f$ is a tree and show that if $G_f$ contains a vertex of degree $3$ or higher, then no $\alpha$-differentially private universally optimal mechanism exists for $f$  for $\alpha> 1/(\Delta(G_f)-1)$. (Recall that the closer $\alpha$ is to one, the better privacy we get.)
\begin{mytheorem}
\label{thm:delta3}
Fix a query $f:\D^n\rightarrow\R_f$, and let $G_f$ be its constraint graph. Consider Bayesian information consumers with loss function $\ell_{bin}$.  If the privacy parameter $\alpha > 1/ (\Delta(G_f)-1)$ then no universally optimal mechanism exists for these consumers.
\end{mytheorem}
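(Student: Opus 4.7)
My approach is to adapt the pattern of Observation~\ref{obsrv:Bayesian} and Claim~\ref{clm:impossibility_sumquery}, replacing the triangle in the sum-query constraint graph by a star in $G_f$. Fix a vertex $v^*\in V(G_f)$ of degree $\Delta := \Delta(G_f)\geq 3$ with neighbours $v_1,\ldots,v_\Delta$, and exhibit two Bayesian consumers (both with loss function $\ell_{bin}$) whose optimal oblivious $\alpha$-differentially private mechanisms have incompatible tight DP patterns.

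For Consumer~1 I would use the uniform prior on $\{v^*,v_1,\ldots,v_\Delta\}$ and solve the LP that maximises $x_{v^*,v^*}+\sum_i x_{v_i,v_i}$ subject to row-stochasticity and DP on the star edges. Exploiting symmetry in the leaves, the fully-tight symmetric candidate gives $x_{v^*,v^*} = (1 - (\Delta-1)\alpha)/(1+\alpha)$, which is negative exactly when $\alpha > 1/(\Delta-1)$ --- this is precisely where the hypothesis enters. Hence the LP optimum slides onto the boundary $x_{v^*,v^*}=0$; the relevant part of column $v^*$ becomes identically zero, and the DP constraint on each edge $(v^*,v_i)$ in column $v_i$ stays tight in the direction $x_{v^*,v_i}=\alpha x_{v_i,v_i}$. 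For Consumer~2 I would use the uniform prior on $\{v_1,\ldots,v_{\Delta-1}\}$; the analogous LP empties columns $v^*$ and $v_\Delta$ and forces tight DP in the \emph{opposite} direction, $x_{v_j,v_i}=\alpha x_{v^*,v_i}$ for $j<\Delta$, $j\neq i$, in every nonzero column $v_i$.

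For the impossibility step, assume a universal $U$ with $X_k = U T_k$ for $k=1,2$. As in the proof of Claim~\ref{clm:impossibility_sumquery}, a tight DP equality among non-negative entries cannot be produced from slack summands, so any tight equality in column $c$ of $X_k$ must already hold in every column $r$ of $U$ with $T_k[r,c]>0$. This assigns each nonzero column of $U$ a ``type'': $A_i$ (``$u_{v^*,r}=\alpha u_{v_i,r}$'', from $X_1$) and $B_i$ (``$u_{v_j,r}=\alpha u_{v^*,r}$ for $j<\Delta$, $j\neq i$'', from $X_2$). A short check shows that for $\alpha<1$ the types $A_i$ and $B_k$ with $i\neq k$ in $\{1,\ldots,\Delta-1\}$ are mutually exclusive on nonzero columns --- imposing both gives $u_{v^*,r}=\alpha^2u_{v^*,r}$, which then collapses the whole column via the remaining DP constraints. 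Combined with the zero-column conditions $X_1[\,\cdot\,,v^*]=0$ and $X_2[\,\cdot\,,v^*]=X_2[\,\cdot\,,v_\Delta]=0$ (which pin down the supports of $T_1$ and $T_2$) and with the row-stochasticity of $T_1$, $T_2$ and $U$, I expect to extract the same kind of numerical infeasibility that ended Claim~\ref{clm:impossibility_sumquery}: an entry of the remapping that is forced to be non-negative by stochasticity and negative by the tight-constraint bookkeeping.

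The main obstacle I anticipate is that, unlike in the sum-query case, $X_1$ and $X_2$ are \emph{not} maximally general in the regime $\alpha>1/(\Delta-1)$; their column graphs are disconnected, so Theorem~\ref{thm:maximal_characterization} does not immediately identify $U$ with either of them up to equivalence. The argument therefore has to rest entirely on the finer ``tight equalities propagate through non-negative combinations'' observation and on careful bookkeeping of which $A_i\cap B_k$ pairs can coexist on a single nonzero column of $U$. Isolating the row-sum inequality that becomes infeasible precisely at the threshold $\alpha=1/(\Delta-1)$ --- and ruling out escape routes through auxiliary output labels in $R$ --- is where I expect the computation to be delicate.
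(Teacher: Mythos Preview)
Your analysis of the tight constraints for the two consumers is essentially correct, and the ``tight equalities propagate through non-negative combinations'' principle is exactly the right tool. The gap is that the two consumers you chose do \emph{not} impose enough tight constraints to make a common $U$ infeasible. Concretely, for $\Delta=3$ and $\alpha=3/5$, take $U$ with four output symbols whose columns (indexed $r_1,\ldots,r_4$, rows ordered $v^*,v_1,v_2,v_3$) are
\[
r_1=\tfrac{1}{240}(75,125,45,45)^T,\quad r_2=\tfrac{1}{240}(75,45,125,45)^T,\quad r_3=\tfrac{1}{240}(45,43,27,75)^T,\quad r_4=\tfrac{1}{240}(45,27,43,75)^T.
\]
This $U$ is row-stochastic and $\alpha$-differentially private. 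Mapping $r_1\!\mapsto\! v_1$, $r_2\!\mapsto\! v_2$, $r_3,r_4\!\mapsto\! v_3$ gives a mechanism with $\sum_i x_{v_i,v_i}=5/3=1/\alpha$, which is optimal for your Consumer~1; mapping $r_1,r_3\!\mapsto\! v_1$, $r_2,r_4\!\mapsto\! v_2$ gives $x_{v_1,v_1}+x_{v_2,v_2}=7/5=2-\alpha$, optimal for your Consumer~2. So a universal $U$ for your two consumers exists, and the final infeasibility you are hoping for never materialises. The underlying reason is that in your regime each nonzero column of $X_1$ carries only \emph{one} tight equality (the constraint $x_{v_j,v_i}=\alpha x_{v^*,v_i}$ for $j\neq i$ is slack at the optimum when $\alpha>1/(\Delta-1)$), and the single tight equality per column from $X_2$ is not enough to pin $U$ down.

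The paper avoids this by choosing consumers whose priors \emph{always include the hub $v_0$ with positive weight} and are supported on a three-vertex path $\{v_0,v_i,v_j\}$. For such a consumer the optimal mechanism is the (feasible) truncated geometric on that path, in which \emph{both} DP constraints along the path are tight in every column. Intersecting the resulting column-type conditions over all pairs $\{i,j\}$ forces every nonzero column of $U$ to be proportional to one of $(1,\alpha,\ldots,\alpha)^T$ or $(\alpha,\ldots,1,\ldots,\alpha^2,\ldots)^T$, and then row-stochasticity yields $c_0=(1-(\Delta-1)\alpha)/(1+\alpha)$, which is negative precisely when $\alpha>1/(\Delta-1)$. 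So the fix is not more bookkeeping on your two consumers but a different choice of consumers---ones whose optimal mechanisms are maximally general on the relevant subgraph rather than degenerate.
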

We can conclude  from theorems~\ref{thm:acyclic} and~\ref{thm:delta3} that for $\alpha > 0.5$, the only functions $f$ for which universally optimal mechanisms exist are those where $G_f$ is a simple chain, as is the case for the count query.

The proof structure is similar to the one presented in the previous section for sum queries. We begin with the case where $G_f$ is a simple cycle. We consider two consumers with different priors and loss function $\ell_{bin}$, and show that the optimal mechanisms for these consumers must have specific structures (in the sense that some privacy constraints are satisfied tightly). Once again, we show that for two mechanisms with such structures, there is no mechanism which is at least as general as these two (i.e., there is no single mechanism which derives both of them).

Next, we extend the proof to the case where $G_f$ contains a cycle. We focus on a cycle in $G_f$ of smallest size $m$, and consider two information consumers. The consumers are similar to those for the case where $G_f$ is a cycle, and so are the optimal mechanisms for them, except that we need to prove that these optimal mechanisms can be extended in a differentially private manner to the entire range of $f$. For that we introduce a labeling of $G_f$ in which the labels of adjacent vertices differ by at most one modulo $m$. 

Last, we discuss the case where $G_f$ is a tree containing a vertex of degree at least $3$. Focusing on that vertex and three of its adjacent vertices, we present three consumers with different priors. Again, we focus on the corresponding entries in the matrices of their optimal mechanisms, and find which constraints must be tight. Assuming all three mechanisms are derived from a single mechanism $U$, we present three different partitions of $U$'s range according to which constraints are tight for every measurable subset of $U$'s range. Combining the attributes from these partitions, we get one elaborated partition of $U$'s range. We can then assume $U$'s range is finite and reveal the structure of its matrix columns. Such a structure of $U$'s columns (for the consumers we chose) is feasible iff we compromise for a privacy parameter $\alpha \leq 0.5$. Finally, we generalize this claim to any degree of one vertex.

\subsection{The Basic Case: $G_f$ is a Cycle}
\label{subsection:basic_case_cycle}

We begin with the simple case where $G_f$ is a single cycle of $m>2$ vertices%
\footnote{An example query that yields such a graph is $f: \{0,1\}^n \rightarrow [m]$ defined as $f(d_1,\ldots,d_n) = \sum_{i=1} d_i \mod m$. If $n\geq m >2$ then $G_f$ is a cycle of size $m$.}. 

\begin{myclaim}\label{clm:Gf_cycle}
If the constraint graph $G_f$ of $f:\D^n\rightarrow\R_f$ is a single cycle, then no universally optimal mechanism for Bayesian information consumers exists for $f$.
\end{myclaim}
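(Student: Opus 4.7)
The plan is to follow the template of Claim~\ref{clm:impossibility_sumquery}: exhibit two Bayesian consumers with loss function $\ell_{bin}$ whose optimal mechanisms carry incompatible tight-privacy-constraint structures, then use Theorem~\ref{thm:maximal_characterization} together with the linearity of remapping to rule out any common universal mechanism. Consumer~1 will have the uniform prior over all $m$ vertices of the cycle; Consumer~2 will have a uniform prior supported on an arc of $m-1$ consecutive vertices, omitting a distinguished vertex $v^*$.

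The first step is to pin down Consumer~1's optimum. By rotational symmetry of the cycle and of her prior, and by an LP argument modeled on Observation~\ref{obsrv:Bayesian}, the unique optimal mechanism $X$ is the ``cycle-geometric'' mechanism $x_{i,r} = \alpha^{d_{G_f}(i,r)}/S$, where $S = \sum_{k=0}^{m-1}\alpha^{\min(k,m-k)}$. Uniqueness follows by summing the transitive privacy inequalities $x_{j,i} \ge \alpha^{d_{G_f}(i,j)} x_{i,i}$ and observing that the resulting bound $\sum_i x_{i,i} \le m/S$ is saturated only when every such inequality is an equality. In every column of $X$, the tight privacy constraints cover all $m$ cycle edges when $m$ is even, and all but the antipodal edge when $m$ is odd; in either case the column graph is connected, so by Theorem~\ref{thm:maximal_characterization} the mechanism $X$ is maximally general.

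The second step is to establish a structural property of Consumer~2's optima: any optimum $Y$ contains some column $r^*$ and some cycle edge $e^*=(a,b)$ such that $e^*$ is slack in $Y$'s column $r^*$ while $e^*$ is tight in every column of $X$. The intuition is that Consumer~2, by ignoring $v^*$, strictly beats the diagonal sum that $X$ achieves on $\R_f\setminus\{v^*\}$, so $Y$ cannot agree with $X$ on the relevant rows; hence $Y$ must break some cycle constraint that $X$ preserves uniformly, and in particular must slacken at least one edge incident to $v^*$ in some column. For even $m$ any slack edge suffices since all cycle edges are tight in every column of $X$; for odd $m$ we need only choose $e^*$ different from the single antipodal edge that column $r^*$ of $X$ is slack on, which is a mild condition.

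For the contradiction, suppose a universally optimal $U$ derives both $X$ and $Y$. Maximal generality of $X$ gives $U \preceq_G X$, and hence $Y \preceq_G X$: write $Y = XT$ for a non-negative row-stochastic $T$. Because $x_{a,j} = \alpha x_{b,j}$ for every $j$, we get
$y_{a,r^*} = \sum_j t_{j,r^*}\, x_{a,j} = \alpha \sum_j t_{j,r^*}\, x_{b,j} = \alpha\, y_{b,r^*}$,
contradicting the slackness of $e^*$ in column $r^*$ of $Y$. The main obstacle I anticipate is Step~2: unlike the chain, the cycle's shortcut through $v^*$ obstructs a clean path-geometric extension, so the LP for Consumer~2 has no neat closed-form optimum. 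One likely approach is to argue directly from the KKT conditions on the cycle LP, or to strengthen Consumer~2's prior to concentrate on a shorter arc (e.g., three consecutive vertices of the cycle) so that her optimum inherits the uniqueness structure of Observation~\ref{obsrv:Bayesian} and the required slackness becomes explicit.
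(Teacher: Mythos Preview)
Your Step~1 is correct and matches the paper: the uniform-prior consumer's unique optimum is the cycle-geometric mechanism $X$, and its column graphs are connected, so $X$ is maximally general.

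There is, however, a genuine gap in your contradiction argument. You write ``$x_{a,j} = \alpha x_{b,j}$ for every $j$'' and use this to force $y_{a,r^*} = \alpha\, y_{b,r^*}$. But no \emph{directed} edge constraint is tight across all columns of $X$. In the cycle-geometric mechanism, column $j$ has $x_{i,j} = \alpha^{d_{G_f}(i,j)}/S$, so for the edge $(a,b)$ you get $x_{a,j}/x_{b,j} = \alpha$ when $j$ is on $b$'s side of the cycle and $x_{a,j}/x_{b,j} = 1/\alpha$ when $j$ is on $a$'s side (and, for odd $m$, ratio $1$ at the antipodal column). Your earlier remark that ``the tight privacy constraints cover all $m$ cycle edges'' refers to \emph{undirected} tightness in the column-graph sense; this does not give a uniform directed inequality, and the convex combination $\sum_j t_{j,r^*} x_{a,j}$ can land at any ratio in $[\alpha,1/\alpha]$ relative to $\sum_j t_{j,r^*} x_{b,j}$. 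So the step $y_{a,r^*} = \alpha\, y_{b,r^*}$ does not follow, and no single-edge slackness in $Y$ yields a contradiction.

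Your own suggested fix---take Consumer~2's prior uniform on three consecutive vertices $v_0,v_1,v_2$ rather than on an $(m-1)$-arc---is exactly what the paper does, and it resolves both difficulties at once. With $m>3$ the path $v_0,v_1,v_2$ has no shortcut, so the optimum $Y$ on those three rows is the truncated geometric mechanism with columns proportional to $(1,\alpha,\alpha^2)^T$, $(\alpha,1,\alpha)^T$, $(\alpha^2,\alpha,1)^T$; this is explicit and unique. The contradiction then runs in the \emph{other} direction from yours: every column of $X$ must be remapped by $T$ to \emph{some} column of $Y$, and a column of $Y$ with two tight directed constraints on rows $v_0,v_1,v_2$ can only absorb columns of $X$ whose first three entries satisfy those same two directed constraints. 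The antipodal column of $X$ has first three entries in ratio $(\alpha^{(m-1)/2},\alpha^{(m-1)/2},\alpha^{(m-1)/2-1})$ (odd $m$) or $(\alpha^{m/2-1},\alpha^{m/2},\alpha^{m/2-1})$ (even $m$), which matches none of the three patterns of $Y$, giving the contradiction. So: abandon the $(m-1)$-arc consumer and the single-edge argument, and instead use the three-vertex consumer together with the ``every column of $X$ must go somewhere'' direction.
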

\begin{proof}
Assume $G_f$ is the cycle $C_m = (v_0,v_1,\ldots,v_{m-1},v_0)$. We already proved impossibility of universal optimality for the case $m=3$ in Claim~\ref{clm:impossibility_sumquery}. We now deal with the case $m > 3$. As in the proof of Claim~\ref{clm:impossibility_sumquery}, we will present two information consumers, and their corresponding optimal mechanisms, and prove that these cannot be derived from a single mechanism.

We first consider an information consumer with loss function $\ell _{bin}$ and prior $p_{v_0}=p_{v_1}=\cdots = p_{v_{m-1}}=1/m$, and construct the unique optimal mechanism $X$  for this consumer. ($X$ is represented by an $m\times m$ matrix since with the $\ell_{bin}$ loss function the support of the optimal mechanism's range must match the support of the consumer's prior.) An optimal mechanism minimizes
$$
\sum_{v_i \in C_m} p_{v_i} \sum_{r \in C_m} x_{v_i,r} \cdot \ell_{bin} (v_i,r)
 = \sum_{v_i \in C_m} p_{v_i} \cdot (1-x_{v_i,v_i})         
 = 1 - \frac{1}{m} \sum_{v_i \in C_m} x_{v_i,v_i},
$$
and hence, the consumer's goal is to maximize $\sum_{v_i \in C_m} x_{v_i,v_i}$ subject to maintaining $\alpha$-differential privacy. Maintaining $\alpha$-differential privacy implies
\begin{equation}\label{eq:dp_cycle}
\alpha^{d_{G_f}(v_i,v_j)} x_{v_i,v_i}  \leq x_{v_j,v_i}  \quad \forall v_i,v_j \in C_m ,
\end{equation}
and hence, by summing up the inequalities for all $v_i,v_j$, we get
$$\sum_{v_i \in C_m} \sum_{\substack{v_j \in C_m \\ v_j \neq v_i}}  \alpha^{d_{G_f}(v_i,v_j)}  x_{v_i,v_i} \leq \sum_{v_i \in C_m} \sum_{\substack{v_j \in C_m \\ v_j \neq v_i}} x_{v_j,v_i}   
= \sum_{v_i \in C_m} 1 - x_{v_i,v_i}    
= m - \sum_{v_i \in C_m} x_{v_i,v_i},
$$
and we conclude that  
$$\sum_{v_i \in C_m} x_{v_i,v_i} \leq \frac{m}{1+\sum_{\substack{v_j \in C_m \\ v_i \neq v_j}} \alpha^{d_{G_f}(v_i,v_j)}}. $$
This inequality is tight iff Equation~(\ref{eq:dp_cycle}) is tight (i.e., $x_{v_j,v_i} = \alpha^{d_{G_f}(v_i,v_j)}x_{v_i,v_i}$) for every $v_i \neq v_j \in C_m$. In such a case, we can find the mechanism's entries by solving a system of $m$ linear equations  (the sum of each row in the mechanism must be $1$), in a similar argument to the one presented in the proof of Observation~\ref{obsrv:Bayesian}. Since these are $m$ independent linear equations in $m$ variables, our optimal solution for $x_{v_1,v_1},\ldots ,x_{v_m,v_m}$ is unique. 

Utilizing the symmetry of the equations, we get that every row of $X$ is a cyclic shift of:
\begin{align}
\label{eq:X_rows}
& \delta \cdot (1, \alpha^1, \alpha^2, \ldots, \alpha^{(m-1)/2},\alpha^{(m-1)/2},\alpha^{(m-1)/2-1},\ldots, \alpha^2,\alpha^1) \quad & \text{if}~m~\text{is odd,} \\
& \delta \cdot (1, \alpha^1, \alpha^2, \ldots, \alpha^{m/2-1},\alpha^{m/2},\alpha^{m/2-1},\ldots, \alpha^2,\alpha^1) & \text{if}~m~\text{is even}.  \nonumber
\end{align}
where $\delta$ is chosen such that $X$ is row-stochastic. The mechanism $X$ satisfies $\alpha$-differential privacy, it is optimal for our information consumer, and it is unique.

Our second information consumer uses $\ell_{bin}$ as her loss function, and prior \mbox{$p_{v_0} = p_{v_1} = p_{v_2} = 1/3$} and $p_{v_3}=\cdots=p_{v_{m-1}}=0$. Note that since $m>3$ the vertices $v_0,v_2$ are not adjacent in $G_f$ (so \mbox{$d_{G_f}(v_0,v_2)=2$}). In constructing an optimal mechanism $Y$ for the information consumer we will only consider the rows and columns pertaining to vertices $v_0,v_1,v_2$, noting that the columns for all other vertices contain only zeros, and there is some freedom with respect to the rows for the other vertices. Applying similar arguments as for mechanism $X$, we get that the columns of $Y$ are of the forms $(1,\alpha^1,\alpha^2)^T$, $(\alpha^1,1 ,\alpha^1)^T$, $(\alpha^2,\alpha^1,1)^T$ (each of the columns may be multiplied by a different coefficient). By forcing row stochasticity,  we can solve the following equations to get the coefficients:
\begin{equation*}
\begin{bmatrix}
	1        & \alpha^1 & \alpha^2 \\
	\alpha^1 & 1        & \alpha^1 \\
	\alpha^2 & \alpha^1 & 1       
\end{bmatrix}
\times
\begin{pmatrix}	c_1 \\	c_2 \\	c_3 
\end{pmatrix}
=  \begin{pmatrix}	1 \\ 1 \\ 1 \end{pmatrix}
\end{equation*}
and we get a unique structure on the entries of these rows and columns of $Y$. This mechanism is of no surprise, as these entries are merely the finite-range version of the geometric mechanism (as shown in \cite{GRS09}).

Summarizing our findings, we get that
\begin{equation}
\label{eq:optimal_mecs}
X = \delta \cdot 
	\begin{bmatrix}
		1 & \alpha & \alpha^2 & \cdots & \alpha^2 & \alpha \\
		\alpha & 1 &\alpha  & \cdots & \alpha^3 & \alpha^2 \\
		\alpha^2 & \alpha & 1 & \cdots & \alpha^4 & \alpha^3 \\
		\vdots & \vdots & & \ddots & & \vdots \\
		\alpha & \alpha^2 &\alpha^3 &\cdots &\alpha &1
	\end{bmatrix};\;
Y =\begin{bmatrix}
	c_1        & c_2 \cdot\alpha^1 & c_3 \cdot\alpha^2 & 0 & \cdots & 0\\
	c_1 \cdot\alpha^1 & c_2        & c_3 \cdot\alpha^1 & 0 & \cdots & 0\\
	c_1 \cdot\alpha^2 & c_2 \cdot\alpha^1 & c_3       & 0 & \cdots & 0 \\
	\vdots & \vdots & \vdots & \vdots& \ddots  & \vdots \\
\vdots & \vdots & \vdots & 0 & \cdots & 0 
\end{bmatrix}.
\end{equation}

We now show that instances of such mechanisms $X$ and $Y$ are not derivable from a single mechanism. Since the conditions stated for these mechanisms are necessary for them to be optimal for the two consumers we chose, this will prove that there is no universally optimal mechanism in such a scenario. 

Suppose, towards a contradiction, that there exists a mechanism $U$ which derives both $X$ and some instance of $Y$. According to the characterization of generally maximal differentially private mechanisms (Theorem \ref{thm:maximal_characterization}), $X$ is maximally general. Therefore, we get that $U$ is derivable from $X$ and so $Y$ is derivable from $X$ as well. Therefore, there exists a remapping matrix $T$ such that $Y = XT$. 
Remember that $Y$'s columns are linear combinations of $X$'s columns with non-negative coefficients, as described in the proof of Claim \ref{clm:impossibility_sumquery}. Any tight constraint met in one of $Y$'s columns must match the same tight constraints in all of $X$'s columns which appear in the linear combination of that column. Once again, any specific column of $X$ must appear in at least one linear combination of one of $Y$'s columns with a positive coefficient (as any possible output of $X$ must be remapped to the values $\{v_0,v_1,v_2\}$ by $T$).
Notice that one of $X$'s columns is 
\begin{align*}
& \delta \cdot (\alpha^{(m-1)/2}, \alpha^{(m-1)/2},\alpha^{(m-1)/2-1},\ldots , 1 , \ldots , \alpha^{(m-1)/2-1})^T \quad & \mbox{if $m$ is odd,} \\
& \delta \cdot (\alpha^{m/2-1}, \alpha^{m/2},\alpha^{m/2-1},\ldots , 1 , \ldots , \alpha^{m/2-2})^T & \mbox{if $m$ is even},
\end{align*}
Mapping this column into any of $Y$'s first three columns (with any positive probability) cannot yield the tight constraints which appear in the first three entries of the chosen column in $Y$. Therefore, no such remapping $T$ is feasible and we get a contradiction.

\remove{
Suppose, towards a contradiction, that there exists a mechanism $U$ which induces $X$ and some instance of $Y$. Then there exist remappings $T,S$, such that $X= T \circ U$ and $Y= S \circ U$. For simplicity of the presentation, we assume $U$ is discrete (the proof can be extended for mechanisms with continuous range similarly to the extension of Claim~\ref{clm:impossibility_sumquery} in Appendix~\ref{app:continuous_range}). Every column in $X$ has $m-1$ tight constraints  in case $m$ is odd, and $m$ tight constraints in case $m$ is even (see Equation~(\ref{eq:X_rows})). Define $B$ to be all the zero columns of $U$. 
Define 
\begin{equation*}
	A_i = \{ u \in \mbox{Range}(U) : \mbox{$u$ is remapped with a positive probability to $v_i$ by mechanism $T$} \} \setminus B.
\end{equation*}
Similar arguments to those presented in the proof of Claim~\ref{clm:impossibility_sumquery} show the sets $A_1,A_2,\ldots A_m,B$ partition the columns of $U$ ($B$ may be empty, but none of the sets $A_i$ is empty), and that all columns of a set $A_i$ follow the same structure of tight constraints as the relevant column in the derived mechanism $X$. I.e., all the columns of $A_1$ are of the form 
$ \delta_1 \cdot (1,\alpha^1,\alpha^2,\alpha^3,\ldots , \alpha^2,\alpha^1)^T$,
all the columns of $A_2$ are of the form 
$ \delta_2 \cdot (\alpha^1 ,1,\alpha^1,\alpha^2,\ldots , \alpha^3, \alpha^2)^T$, and so on, where $\delta_1,\delta_2,\ldots > 0$. 
In other words, (up to the multiplicative coefficients) columns in $A_i$ are like columns in $A_{i-1}$ cyclically shifted one position. In particular, there must exist a column $z$ in $U$ that is of the form 
\begin{align*}
& \delta \cdot (\alpha^{(m-1)/2}, \alpha^{(m-1)/2},\alpha^{(m-1)/2-1},\ldots , 1 , \ldots , \alpha^{(m-1)/2-1})^T \quad & \mbox{if $m$ is odd,} \\
& \delta \cdot (\alpha^{m/2-1}, \alpha^{m/2},\alpha^{m/2-1},\ldots , 1 , \ldots , \alpha^{m/2-2})^T & \mbox{if $m$ is even},
\end{align*}
where $\delta > 0$.

 Consider now a similar partitioning according to the constraints of mechanism $Y$, and remember that the first three entries of this column are entries in the rows of $v_0,v_1,v_2$ (i.e., we now have constraints only on the ratios between the first three entries in each column). We get that every non-zero column of $U$ is of one of the forms: $\gamma_1 \cdot (1, \alpha^1,\alpha^2, \ldots)^T, \gamma_2 \cdot (\alpha^1, 1, \alpha^1, \ldots)^T, \gamma_3 \cdot (\alpha^2, \alpha^1, 1, \ldots)^T$. This leads to a contradiction as the column $z$ does not match any of these subsets of $U$'s columns.
}

\end{proof}

\subsection{Impossibility of Universal Optimality When $G_f$ Contains a Cycle}

We now give a proof for Theorem \ref{thm:acyclic} which deals with the case where $G_f$ \emph{contains} a cycle.

\begin{proof}
Let $C_m=(v_0,v_1,\ldots ,v_{m-1},v_0)$ be a cycle of smallest size in $G_f$. Based on $C_m$, we will consider two consumers whose optimal mechanisms contain as sub-matrices the matrices  $X,Y$ from the proof of  Claim~\ref{clm:Gf_cycle}, and hence they cannot be derived from a single mechanism.

\paragraph{The First Consumer: uniform prior over $C_m$}

Consider an information consumer with loss function $\ell _{bin}$ and prior $p_{v_0}=p_{v_1}=\cdots = p_{v_{m-1}}=1/m$ and $p_u=0$ for every $u \notin C_m$. We will construct an optimal mechanism $X'$ for this consumer, and will prove that (in some sense) it is unique.
We begin with a labeling algorithm of the vertices in $G$:
\begin{enumerate}
\item \label{step:labelC_m} Given $C_m=(v_0,v_1,\ldots ,v_{m-1},v_0)$, set $l(v_i) = i$ for $i\in \{0,\ldots,m-1\}$.
\item \label{step:label_s} For $s$ from $1$ to $m-1$: 
\begin{enumerate}
\item Let $V_s$ be the set of unlabeled vertices that are adjacent to vertices labeled $s-1$.
\item Let $l(u)=s$ for all $u\in V_s$.
\end{enumerate}
\item \label{step:label_all_rest} Let $l(u) = m-1$ for all remaining vertices $u$.
\end{enumerate}

\begin{myclaim}
After applying the above algorithm, the labels for every two adjacent vertices differ by at most 1 (modulo $m$).
\end{myclaim}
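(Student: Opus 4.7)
My plan is a case analysis on where the endpoints of the edge $(u,v)$ lie with respect to $C_m$; the key lever will be that $C_m$ is a shortest cycle in $G_f$ (equivalently, $G_f$ has girth $m$). When both endpoints lie in $C_m$, I would first note that any chord would split $C_m$ into two cycles of length strictly less than $m$, contradicting the minimality of $C_m$; so every edge within $C_m$ is a cycle edge, and the labels of its endpoints differ by exactly $1 \pmod{m}$.

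The main case will be when $u = v_a \in C_m$ and $v \notin C_m$. For the upper bound $l(v) \le a+1$ I just read off the algorithm: at the start of step $a+1$ the vertex $u$ already carries label $a$, so $v$ (if still unlabeled) gets label $a+1$ in that step. For the lower bound I plan to argue by contradiction: assume $l(v) = s$ with $s \le a-2$. Iterating the fact that every labeled vertex of label $\ge 1$ has a neighbor labeled one less, I can extract a trace-back path $v = w_0, w_1, \ldots, w_k = v_b$ of strictly decreasing labels $s, s-1, \ldots, s-k=b$ ending at a cycle vertex $v_b$; crucially, by choosing $v_b$ to be the first cycle vertex the trace-back hits, each intermediate $w_j$ ($1 \le j \le k-1$) lies off $C_m$. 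I would then close the loop by concatenating the edge $v \to v_a$, the cycle arc from $v_a$ to $v_b$ inside $C_m$ of length $c := \min(|a-b|, m-|a-b|)$, and the trace-back from $v_b$ back to $v$ of length $s-b$, producing a simple cycle of length $1 + c + (s-b)$; simplicity holds because the intermediate $w_j$ avoid $C_m$ and therefore the arc. The girth inequality then forces $1 + c + (s-b) \ge m$, and a short case split on whether $c$ is the short arc $a-b$ (in which case the girth bound gives $b \le (a+s-m+1)/2$, incompatible with the requirement $b \ge a - m/2$ needed to be in this case whenever $s \le a-2$) or the long arc $m-(a-b)$ (in which case the girth bound gives $s \ge a-1$ directly) would deliver the contradiction. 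The boundary cases $a=0$ and $a=m-1$, where the wrap-around in $\pmod{m}$ matters, are handled by the same ideas together with the observation that non-cycle vertices never receive label $0$.

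When both $u, v$ are non-cycle, I expect the BFS structure to give $|l(u) - l(v)| \le 1$ directly: taking WLOG $l(u) \le l(v)$, at the start of step $l(u)+1$ the vertex $u$ has label $l(u)$, so $v$ is either already labeled (forcing $l(v) = l(u)$) or gets labeled $l(u)+1$ in that step. The only subtlety will be $l(v) = m-1$ arising through the final ``remaining vertices'' step; then $v$ has no neighbor of label $< m-1$, and so $l(u) = m-1$ as well.

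The step I anticipate to be the main obstacle is verifying that the closed walk built in the main case is in fact a simple cycle, since the girth inequality applies only to simple cycles. This will reduce to the observation that the BFS trace-back stays strictly off $C_m$ at intermediate steps, enforced by the initialization's fixed labels on the cycle. Once that simplicity is secured, the girth bound carries the rest of the argument.
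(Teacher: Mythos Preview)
Your proposal is correct and follows essentially the same strategy as the paper: both arguments rule out a bad label gap by tracing back along decreasing labels to the cycle and closing up a simple cycle shorter than $m$, contradicting minimality of $C_m$. The paper's execution is slightly slicker in one respect: rather than stopping the trace-back at the first cycle vertex $v_b$ and then case-splitting on which arc $v_a \leadsto v_b$ is shorter, it continues the trace-back along the cycle all the way to $v_0$ (total length exactly $s$) and uses the complementary arc $v_{s'},v_{s'+1},\ldots,v_{m-1},v_0$, which is automatically disjoint and gives the bound $m - s' + s + 1 < m$ in one stroke.
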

\begin{proof}
We show that at any stage of the labeling, any two adjacent vertices satisfy the requirement that their labels differ by at most 1 (modulo $m$).

Note first that this holds for all labeled vertices after Step~\ref{step:labelC_m}.
Consider a vertex $u\in V_s$ (i.e., $l(u)=s$ is set in iteration $s$), and an adjacent vertex $u'$ that is labeled $l(u')=s'$ prior to or on iteration $s$.
Clearly, if $u'\in V_s \cup V_{s-1}$ then $s'\in\{s-1,s\}$ and the statement holds for $(u,u')$. 
Otherwise, we consider two sub-cases. In the first, $l(u') = s' < s-1$, and we are led to a contradiction since $u$ remains unlabeled after iteration $s'+1$ whereas by definition $u \in V_{s'+1}$. In the second sub-case $l(u') = s' > s+1$ (if $s'=s+1$ the claim holds) and hence it must have been that $u'$ was labeled in Step~\ref{step:labelC_m}, i.e., $u' = v_{s'}$ for $s' \in \{s+1,\ldots, m-1\}$. Following the path of labels which led to the label of $u$ we can get to the vertex $v_0$ via a path of length $s$. Noting that this path is disjoint from the length $m-s'$ path $v_{s'}\leadsto v_0 = v_{s'},v_{s'+1},\ldots,v_{m-1},v_0$, we get that $G$ contains the cycle $v_{s'} \leadsto v_0 \leadsto u \leadsto u'$ that is of length $m-s' + s + 1 < m$, in contradiction to $C_m$ being the smallest cycle in $G$. 
To conclude the proof, note that every vertex $u \in G$ adjacent to some $u' \in G$ such that $l(u') \in \{0,1,\ldots , m-2\}$ has been labeled in iteration $l(u')+1$ or earlier. Therefore in Step~\ref{step:label_all_rest}, the vertices which are not labeled yet are adjacent only to unlabeled vertices and to vertices with label $m-1$. Labeling the remaining vertices with $m-1$ satisfies the requirement. 
\end{proof}

We now use the graph labels to construct an optimal mechanism $X'$, represented by a matrix of dimensions $|\R_f| \times |\R_f|$. The entries of rows $u \notin C_m$ have no effect on the expected loss of this consumer, as $p_u = 0$. There are, however, restrictions on these rows, as the mechanism $X'$ must be differentially private. We construct $X'$ as follows:
\begin{enumerate}
\item For all $u \notin C_m$, set column $u$ of $X'$ to be a column of zeros.
\item For all $u \in C_m$, set row $u$ of $X'$ as in the optimal mechanism $X$ described in the proof of Claim~\ref{clm:Gf_cycle} (i.e., Equation~(\ref{eq:optimal_mecs})).
\item For all $u \notin C_m$, set row $u$ of  $X'$ to be identical to the row corresponding with the vertex identically labeled in $C_m$.
\end{enumerate}
Clearly, the resulting mechanism is row-stochastic. The privacy constraints also hold: suppose \mbox{$u,u' \in \R_f$} are query results of neighboring databases. Therefore, they are adjacent in the constraint graph, and their labels differ by at most $1$ (modulo $m$). And so, their matching rows in mechanism $X'$ are either identical or they are the same as rows of two adjacent vertices $v_i,v_j \in C_m$ in mechanism $X$. Since the construction of rows in $X$ hold to the privacy constraints, so do the rows of $X'$. In other words, we just showed that mechanism $X$ can be extended to any query $f$ whose constraint graph $G_f$ contains $C_m$ but no smaller cycles. 

Notice that only rows of $C_m$ affect the expected loss in $X'$, which is hence identical to that of $X$. Since any mechanism in this scenario has to satisfy all the restrictions for just the vertices of the cycle $C_m$, and more, the expected loss for any optimal mechanism in the current scenario is lower bounded with that of $X$. Hence, we can conclude that $X'$ is optimal for the information consumer, and furthermore, $X'$ restricted to the rows corresponding to $C_m$ is unique.

\paragraph{The Second Consumer: uniform prior over $v_0,v_1,v_2$}
Consider an information consumer with loss function $\ell _{bin}$ and prior $p_{v_0}=p_{v_1}=p_{v_2}=1/3$ and $p_u=0$ for every other $u \in \R_f$. We argue that every optimal mechanism $Y'$ for this consumer has the same structure on rows $v_0,v_1,v_2$ as mechanism $Y$ in Equation~(\ref{eq:optimal_mecs}). As the impossibility of universal optimality for the case of $m=3$ was already covered, and we assumed $m>3$, $v_0$ and $v_2$ are not adjacent in $G_f$. This enables us to label the vertices in such a way: $l(v_0)=0$, $l(v_2)=2$ and $l(u)=1$ for any other vertex in $G_f$. Again, it is clear that every two adjacent vertices have labels which differ by $1$ at most. Similar arguments as the ones presented for the first consumer, show that the first three rows of every optimal mechanism for this consumer (i.e.~the rows for $v_0,v_1,v_2$) have the same structure as the first three rows of mechanism $Y$ in Equation~(\ref{eq:optimal_mecs}).

Assume towards a contradiction that both $X'$ and $Y'$ are derivable from a single mechanism $U'$. Therefore there exist remappings $T,S$ such that $X'=U'T$ and $Y'=U'S$. Let $U$ be the mechanism $U'$ reduced to only the inputs of the cycle $C_m = \{v_0,v_1,\ldots ,v_{m-1}\}$. Reducing $U'$ to $U$, we get that $X=UT$ and $Y=US$. According to the previous subsection these two mechanisms cannot be derived from a single
oblivious mechanism, due to the same arguments in the proof of Claim \ref{clm:impossibility_sumquery}. Thus, we get a contradiction. 

\remove{
Ignoring the rows of query results other than the $C_m$ vertices, we can partition once more the columns of every universal mechanism $U$ according to the columns of $X'$ and according to the columns of $Y'$. The same argument, as presented in the basic case of a single cycle graph, shows that such two partitions contradict, and there can be no universal mechanism which induces optimal instances of $X'$ and $Y'$.
}
\end{proof}

\subsection{Impossibility of Universal Optimality When $\Delta(G_f) \geq 3$}

We now focus on acyclic constraint graphs and prove Theorem \ref{thm:delta3} and its conclusion that for $\alpha > 0.5$ no universally optimal mechanisms exists unless the constraint graph is a simple chain.

\begin{proof}
For simplicity of this proof, we first focus only on $3$ neighbors of a specific vertex, and prove that no universally optimal mechanism exists for $\alpha>1/(3-1)=0.5$. Later, we generalize this result for a vertex of any degree by taking into account all of the vertex's neighbors. The generalization is done using the same methods we use to prove the simpler case.

Let $v_0$ be a vertex in $G_f$ with a degree greater than $2$. Let $v_1,v_2,v_3$ be $3$ of its neighbors. We choose some consumers with loss function $\ell_{bin}$ and zero a priori probability for all values other than $v_0,v_1,v_2,v_3$. We define some necessary conditions on the optimal mechanisms of these consumers and show it is impossible to simultaneously derive optimal mechanisms for these consumers from a single mechanism $U$ (when $\alpha > 0.5$). 

Note that by the tree structure of $G_f$, every mechanism that satisfies the requirements of differential privacy on query results $v_0,v_1,v_2,v_3$ can be easily extended to a differentially private mechanism on all results of $\R_f$%
\footnote{One possible extension is as follows: Suppose $X$ is a mechanism from $\{v_0,v_1,v_2,v_3\}$ to $\{v_0,v_1,v_2,v_3\}$. Label each of the vertices $v_0,v_1,v_2,v_3$ by $l(v_i)=i$, then label every other vertex in the graph with the same label as its nearest labeled vertex. Construct a mechanism $X'$ from $X$ like this: Set $x'_{v_i,v_j} = x_{v_i,v_j}$ for every $i,j \in \{0,1,2,3\}$. Set $x'_{v_i,u}=0$ for every $u \notin \{v_0,v_1,v_2,v_3\}$. For every $u \notin \{v_0,v_1,v_2,v_3\}$, set the row of $u$ to be the same as the row of $v_{l(u)}$. }. 
Furthermore, since our consumers have zero a priori probability for all other values, the entries in rows corresponding to values other than $v_0,v_1,v_2,v_3$ do not affect the consumers' expected loss.
Hence, it suffices to show the impossibility result for the case where $G_f$ is restricted to $v_0,v_1,v_2,v_3$.%
\footnote{An example query that yields such a graph is $f: \{1,2,3\}^n \rightarrow \{0,1,2,3\}$ defined as $f(D) = i$ if all records in $D$ equal $i$, $0$ otherwise.}

Consider first an information consumer with prior $p_{v_0}=p_{v_1}=p_{v_2}=1/3, p_{v_3}=0$. note that $v_1,v_0,v_2$ is a simple path of length $3$ in $G_f$ for which the optimal mechanism was described in Section~\ref{subsection:basic_case_cycle}. Any optimal mechanism for this consumer is of the form 
\begin{equation*}
Y=
\begin{bmatrix}
	c_0                & c_1 \cdot \alpha    &  c_2 \cdot \alpha    & 0 \\
	c_0 \cdot \alpha   & c_1                 &  c_2 \cdot \alpha^2  & 0 \\
	c_0 \cdot \alpha   & c_1 \cdot \alpha^2  &  c_2                & 0 \\
	q_0  & q_1 &  q_2  & 0 
\end{bmatrix}
\end{equation*}
where $q_0+q_1+q_2=1$ and they are subject to some privacy constraints. The restrictions on the optimal mechanism for this consumer are the same as those on mechanism $Y$ in Equation~(\ref{eq:optimal_mecs}), only now $v_0$ is the vertex in the middle, so the first two rows were swapped, as were the first two columns.

Suppose that such a mechanism was derived from a universal mechanism $U$ by some remapping $T$. Suppose for now that $U$'s range is discrete and so it can be expressed in matrix form. (We abuse a little the notion of a matrix and allow $U$ to have infinitely many columns, and $T$ to have infinitely many rows, if needed). As noted before, this means that $Y$'s columns are linear combinations with positive coefficients of columns in $U$. Also, remember that since the coefficients are non-negative, linearly combining columns which do not hold tight privacy constraints, cannot yield a column with tight constraints. Since $T$ is row-stochastic, every row of $T$ has at least one positive entry. This means that every column in $U$ is remapped (with some positive probability) to a column in $Y$. Assume $U$ does not have zero columns (otherwise we could just ignore them as they pertain to results which are not in $U$'s range). From the reasons above and the the structure of constraints in $Y$ which are tight, we conclude that all of $U$'s columns can be partitioned into columns of the forms: $\delta_1 \cdot(1, \alpha, \alpha, \ast)^T$, $\delta_2 \cdot(\alpha, 1, \alpha^2, \ast)^T$, $\delta_3 \cdot(\alpha, \alpha^2, 1, \ast)^T$. The first set of columns is summed by $T$ into the first column of $Y$, The second set is summed by $T$ into the second column of $Y$, and the third set is summed to the third column of $Y$. The $\ast$ can take infinitely many values as it does not necessarily match to a tight constraint in $Y$. 

Considering now an information consumer with a prior $p_{v_0}=p_{v_1}=p_{v_3}=1/3,p_{v_2}=0$, and applying the same arguments, we have that the non-zero columns of the universal mechanism $U$ are partitioned into columns of the forms $\delta_1 \cdot(1, \alpha, \ast, \alpha)^T, \delta_2 \cdot(\alpha, 1, \ast, \alpha^2)^T, \delta_3 \cdot(\alpha, \alpha^2, \ast, 1)^T$. Similarly, considering a consumer with a prior $p_{v_0}=p_{v_2}=p_{v_3}=1/3,p_{v_1}=0$, we have that the non-zero columns of the universal mechanism $U$ are partitioned into columns of the forms $\delta_1 \cdot(1, \ast, \alpha, \alpha)^T, \delta_2 \cdot(\alpha, \ast, 1, \alpha^2)^T, \delta_3 \cdot(\alpha, \ast, \alpha^2, 1)^T$.

Notice that every non-zero column in $U$ must match one category in each of the partitions described above. Combining these conditions together, we have that the non-zero columns of $U$ are partitioned into columns of the forms $\gamma_1 \cdot (1,\alpha, \alpha, \alpha)^T, \gamma_2 \cdot (\alpha, 1, \alpha^2, \alpha^2)^T, \gamma_3 \cdot (\alpha, \alpha^2, 1, \alpha^2)^T, \gamma_4 \cdot (\alpha, \alpha^2, \alpha^2, 1)^T$.
As the columns in every category are proportional to one another, we assume that the mechanism $U$ has exactly one column in each of these categories. We can assume that, since if $U'$ is a mechanism with two non-zero columns which are proportional to one another, we can produce a mechanism $U$ by replacing these columns with a single column containing their sum. Then $U$ is derivable from $U'$, and vice versa. Therefore these mechanisms are equivalent.

Note that we assumed $U$'s range is discrete only for convenience. $U$'s range can be continuous as well, as explained by Kifer and Lin \cite{KL10}. Define $T$'s inverse to be for every vertex $v$, $T^- (v) = \{o' \in Rng(U) : Pr[T(o')=v] > 0 \}$. The same arguments from before hold, and we get that for every measurable $O' \subseteq T^-(v)$, and any adjacent vertices $v_i,v_j$, $\frac{Pr[T \circ U (v_i)=v]}{Pr[T \circ U (v_j)=v]} = \frac{Pr[U(v_i) \in O']}{Pr[U(v_j) \in O']}$, unless one of the probabilities is zero in which case all the probabilities are zero due to differential privacy constraints. This is because we assume $\frac{Pr[T \circ U (v_i)=v]}{Pr[T \circ U (v_j)=v]}$ is tight by differential privacy constraints,  and it can be expressed as a positive combination over measurable sets in $T^-(v)$ which, therefore, must be tight as well. And so, the same structure of tight constraints as they appear in the derived mechanism, must appear also for every measurable subset of $T^-(v)$ for every $v$ in the derived mechanism's output.

We conclude that a universal mechanism must be of the form:
\begin{equation*}
U=
\begin{bmatrix}
	c_0               & c_1 \cdot \alpha    & c_2 \cdot \alpha     & c_3 \cdot \alpha \\
	c_0 \cdot \alpha  & c_1                 & c_2 \cdot \alpha^2   & c_3 \cdot \alpha^2 \\
	c_0 \cdot \alpha  & c_1 \cdot \alpha^2  & c_2                  & c_3 \cdot \alpha^2 \\
	c_0 \cdot \alpha  & c_1 \cdot \alpha^2  & c_2 \cdot \alpha^2   & c_3
\end{bmatrix}.
\end{equation*}
The privacy and non-negativity constraints hold if $c_i \geq 0$ for every $i$. Imposing row-stochasticity, we can solve for the coefficients and get the unique solution: $c_1=c_2=c_3=1/(\alpha+1), c_0 = (1- 2 \alpha )/ (\alpha +1)$. Mechanism $U$ is only feasible if $c_0 \geq 0$, or equivalently $\alpha \leq 0.5$. 

Note that, so far, we used only three of the vertices adjacent to $v_0$. Suppose $v_0$ has $k>3$ neighbors. We actually can achieve stronger results by treating more consumers, each with a prior of uniform probability over only three vertices (one of which is $v_0$). Using the same arguments, and combining the partitions imposed by each of the consumers on $U$'s columns, we get that $U$'s positive columns are partitioned into columns of the forms: $\gamma_0 \cdot (1,\alpha, \alpha, \ldots, \alpha)^T$, $\gamma_1 \cdot (\alpha, 1, \alpha^2, \ldots, \alpha^2)^T$, $\gamma_2 \cdot (\alpha, \alpha^2, 1, \ldots, \alpha^2)^T$, \ldots , $\gamma_k \cdot (\alpha, \alpha^2, \alpha^2, \ldots, 1)^T$. Thus, a universal mechanism for all these consumers must have the structure:
\begin{equation*}
U=
\begin{bmatrix}
	c_0               & c_1 \cdot \alpha    & c_2 \cdot \alpha     & \ldots   & c_k \cdot \alpha \\
	c_0 \cdot \alpha  & c_1                 & c_2 \cdot \alpha^2   & \ldots   & c_k \cdot \alpha^2 \\
	c_0 \cdot \alpha  & c_1 \cdot \alpha^2  & c_2                  & \ldots   & c_k \cdot \alpha^2 \\
	\vdots            & \vdots              & \vdots               & \ddots   & \vdots \\
	c_0 \cdot \alpha  & c_1 \cdot \alpha^2  & c_2 \cdot \alpha^2   & \ldots   & c_k
\end{bmatrix}.
\end{equation*}
Imposing row-stochasticity, we can solve for the coefficients and get the unique solution: $c_i = 1/(\alpha +1)$ for every $i>0$ and $c_0 = (1-(k-1)\alpha)/(\alpha+1)$. Mechanism $U$ is only feasible if $c_0 \geq 0$, or equivalently $\alpha \leq 1/(k-1)$.

\end{proof}


\appendix

\remove{
\_section{Sufficiency of Oblivious Universal Mechanisms}
\label{app:non_oblivious}

Recall that, following~\cite{GRS09,GS10}, we only consider information consumers whose a priori probability is over the true value of the query answer $f(D)$ (and not, e.g., over the database entries). While this restriction over the information consumers weakens the positive results of~\cite{GRS09,GS10}, it strengthens our negative results. In both \cite{GRS09} and \cite{GS10}, it is shown that oblivious mechanisms minimize the worst-case disutilities of such information consumers, and hence only oblivious optimal mechanisms are considered. Furthermore, the universally optimal mechanism in~\cite{GRS09,GS10} is the geometric mechanism, which is oblivious.

In our negative results we only consider {\em oblivious} optimal mechanisms. This choice that needs to be justified, as it may be that it is possible to construct a {\em non-oblivious} universally optimal mechanism $U'$ from which all the (oblivious) optimal mechanisms are derivable via remapping. We now show that if this is the case, then it is possible to use $U'$ to construct an oblivious universally optimal mechanism $U$.

Writing $U'$ in matrix form, we let $u'_{D,k}$ be the probability that database $D$ is mapped by $U'$ to outcome $k$. Let 
$$E(i) = \{D\in \D^n: f(D) = i\},$$ 
i.e., $E(i)$ is the set of all the databases whose query result is $i$. Define a new mechanism $U=(u_{i,k})$ by averaging the probabilities of mechanism $U'$:
\begin{equation*}
	u_{i,k} = \frac{1}{\lvert E(i) \rvert} \sum_{D \in E(i)} u'_{D,k}.
\end{equation*}
Note that $U$ is oblivious, and following arguments from~ \cite{GRS09,GS10}, as $U'$ maintains differential privacy, so does $U$. We will show that $U$ is universally optimal.

Consider now a remapping $T=(t_{k,r})$ that may be applied by an information consumer to obtain her oblivious optimal mechanism, i.e., $T \circ U'$ is oblivious. We claim that $T\circ U$ behaves the same. As $T\circ U'$ is oblivious, we have that  for every $i \in \R_f$, for every output $r$, and for every $D_1,D_2 \in E(i)$: 
$$(T \circ U')_{D_1,r} =(T \circ U')_{D_2,r}.$$ 
Hence, for every $D_1,D_2 \in E(i)$ we have:
\begin{equation} \label{eq:U'T_obliviousness}
	\sum_k u'_{D_1,k} \cdot t_{k,r} = \sum_k u'_{D_2,k} \cdot t_{k,r}.
\end{equation}
Now, for every query result $i$ and $D_1 \in E(i)$:
\begin{eqnarray*}
(T \circ U')_{D_1,r} & = & \sum_k u'_{D_1,k} \cdot t_{k,r} \\
& = & \frac{1}{\lvert E(i) \rvert} \sum_{D \in E(i)}  \sum_k u'_{D,k} \cdot t_{k,r}  \quad\mbox{(by Equation~(\ref{eq:U'T_obliviousness}))} \\
&= &\sum_k \frac{1}{\lvert E(i) \rvert} \sum_{D \in E(i)} u'_{D,k} \cdot t_{k,r} \\
 & = &\sum_k u_{i,k} \cdot t_{k,r} = (T \circ U)_{i,r}.
\end{eqnarray*}

We get that for all the remappings $T$ that when applied on the outcome of $U'$ result with an oblivious mechanism, the derived mechanisms $T\circ U'$ and $T \circ U$ are identical. Hence, if an oblivious universally optimal mechanism does not exist for $f$, neither does a non-oblivious universally optimal mechanism.

}

\remove{
\_section{The Basic Impossibility Result With Continuous Range Mechanisms}
\label{app:continuous_range}

We now generalize the proof of Claim \ref{clm:impossibility_sumquery} to the case in which the universal mechanism $U$ might have a continuous range in $\mathbb{R}$.
\begin{myclaim}
Let $$X = \frac{1}{1+2\alpha} \cdot 
\begin{smallbmatrix}
	1 & \alpha & \alpha  \\
	\alpha & 1 & \alpha  \\
	\alpha & \alpha & 1
\end{smallbmatrix}; \; Y = \frac{1}{1+\alpha} \cdot 
\begin{smallbmatrix}
	1 & \alpha & 0 \\
	\alpha & 1 & 0 \\
	q & 1+\alpha-q & 0       
\end{smallbmatrix},$$
where $q \in \left[\alpha, 1\right]$. No $\alpha$-differentially private mechanism can derive both $X$ and $Y$.
\end{myclaim}
\begin{proof}
Assume for a contradiction that such a mechanism $U$ exists. As $X$ can be derived from $U$, there exists a remapping $T$ such that $X = T \circ U$. Similarly, there exists a remapping $W$ and an instance of $Y$ such that $Y = W \circ U$. For every $ i \in \{0,1,2\}$, denote by $f_i(x)$ the probability density function of the distribution of $U$'s output once the input is $i$. For $k \in \{0,1,2\}$ denote by $t_k(x)$ the probability of mechanism $T$ to remap value $x$ into $k$. Denote by $t^*(x)$ the probability of mechanism $T$ to remap value $x$ into a value other than $\{0,1,2\}$. Similarly define $w_0(x), w_1(x)$ and $w^*(x)$ for mechanism $W$. 
Note that for every $x \in \mathbb{R}$
\begin{eqnarray*}
	t_0(x) + t_1(x) + t_2(x) + t^*(x) & = & 1,\; \mbox{and} \\
	w_0(x) + w_1(x) + w^*(x) & = & 1.
\end{eqnarray*}
The probability of the composed mechanism $T \circ U$ to remap value $i$ into $k$ is given by
\begin{equation*}
	x_{i,k} = Pr\left[ T \circ U (i) = k \right] = \int_{\mathbb{R}} f_i(x) \cdot t_k(x) \, dx.
\end{equation*}
Note that the integral is well defined, as $f_i \cdot t_k \geq 0$ and $t_k(x) \leq 1$ for all $x$.
Similarly, 
\begin{equation*}
	y_{i,k} = Pr\left[ W \circ U (i) = k \right] = \int_{\mathbb{R}} f_i(x) \cdot w_k(x) \, dx.
\end{equation*}

Note that, unlike in the discrete case, there may be values $x$ such that $\alpha \cdot f_j(x) > f_i(x)$. But according to the differential privacy constraints, for every measurable subset $S \subseteq \mathbb{R}$ 
\begin{equation*}
	0 \leq \int_S f_i(x) \, dx - \alpha \int_S f_j(x) \, dx = \int_S (f_i(x) - \alpha f_j(x)) \, dx.
\end{equation*}
And so, $f_i(x) - \alpha f_j(x) < 0$ only on a zero-measure subset $S$. Therefore, 
\begin{equation}
f_i(x) \geq \alpha \cdot f_j(x) \quad \mbox{almost everywhere}. 
\label{eq:diffprivconstraints3}
\end{equation}

We now define the sets $A,B,C,D$ in correspondence to the values in $U$'s range which are remapped to the $0^{\text{th}}$, $1^{\text{st}}$, and $2^{\text{nd}}$ columns of X, similarly to the discrete range case:
\begin{align*}
	D & = \{ x \in \mathbb{R} : f_0(x) = f_1(x) = f_2(x) = 0 \},  \\
	A & = \{ x \in \mathbb{R} : t_0(x) > 0 \} \setminus D,      \\
	B & = \{ x \in \mathbb{R} : t_1(x) > 0 \} \setminus D,      \\
	C & = \{ x \in \mathbb{R} : t_2(x) > 0 \} \setminus D. 
\end{align*}

We now argue that $A,B,C,D$ form a partition of $\mathbb{R}$, except for maybe the negligible sets $A \cap B$, $A \cap C$, $B \cap C$ and $\mathbb{R} \setminus (A \cup B \cup C \cup D)$. By definition, $D$ is disjoint from $A,B,C$. In the following we show that $A,B,C$ are almost disjoint, and that $A \cup B \cup C \cup D$ almost equals $\mathbb{R}$, hence almost a partition.

In the case of the $0$ column of our specific mechanism $X$, $\alpha x_{0,0} = x_{1,0}$ and $\alpha x_{0,0} = x_{2,0}$, and we have for $j \neq 0$
\begin{align*}
	0 & = x_{j,0} - \alpha x_{0,0} = \int_\mathbb{R} f_j(x) \cdot t_0(x) \, dx - \alpha \int_\mathbb{R} f_0(x) \cdot t_0(x) \, dx \\ & = \int_{\mathbb{R}} (f_j(x)-\alpha f_0(x)) \cdot t_0(x) \, dx = \int_A (f_j(x)-\alpha f_0(x)) \cdot t_0(x) \, dx. 
\end{align*}
Considering the constraints in Equation~(\ref{eq:diffprivconstraints3}), the integrand is a non-negative function almost everywhere. We conclude these constraints are tight for all $x \in A$ except for a negligible subset. In other words, $\alpha f_0(x) = f_1(x) = f_2(x)$, almost everywhere in $A$. 
A similar analysis yields that $f_0(x) = \alpha f_1(x) = f_2(x)$ almost everywhere in $B$, and $f_0(x) = f_1(x) = \alpha f_2(x)$ almost everywhere in $C$. Since $f_0(x) > 0$ almost everywhere in $A$ (resp. $B$, $C$), we conclude that the sets $A,B$ and $C$ are almost disjoint. I.e., the tight constraints on almost all $x \in A$ exclude $x$ from $B$ and $C$, except maybe some negligible subsets of values in $\mathbb{R}$ which might appear in two (or three) of these sets.

Last, we show that the union of $A,B,C,D$ almost equals $\mathbb{R}$. Since the support of $X$ is $\{0,1,2\}$ and the support of $Y$ is $\{0,1\}$ we conclude that for every $i \in \{0,1,2\}$
\begin{equation*}
	\int_\mathbb{R} f_i(x) \cdot w^*(x) \, dx = \int_\mathbb{R} f_i(x) \cdot t^*(x) \, dx = 0 
\end{equation*}
Therefore, $t^*(x) = 0$ and $w^*(x) = 0 $, and hence $t_0(x) + t_1(x) + t_2(x) = 1 $ and $ w_0(x) + w_1(x) = 1 $, for every $x \in \mathbb{R}$ except for a subset negligible according to the measure $f_i$. Every such $x$ is in $ A \cup B \cup C $. For every other $x \in \mathbb{R}$, $f_0(x) = f_1(x) = f_2(x) = 0$ except for negligible subsets in $\mathbb{R}$, and so $ x \in A \cup B \cup C \cup D $ almost everywhere in $\mathbb{R}$.

Therefore, there exists a partition $A',B',C',D'$ such that the integrals over $A$ (resp. $B$, $C$ or $D$) equal the integrals over $A'$ (resp. $B'$, $C'$ or $D'$). We assume $A,B,C,D$ is indeed a partition.

Now observe the remapping $W$ and mechanism $Y$. Note that some privacy constraints in $Y$ are tight (e.g., $\alpha y_{0,0}=y_{1,0}$). Hence,
\begin{align*}
	0 &= y_{1,0} - \alpha y_{0,0} \\
	  &= \int_{\mathbb{R}} f_1(x) \cdot w_0(x) \, dx - \alpha \int_{\mathbb{R}} f_0(x) \cdot w_0(x) \, dx \\
	  &= \int_{\mathbb{R}} \left( f_1(x) - \alpha f_0(x) \right) \cdot w_0(x) \, dx \\
	  &= \int_{A} \left( f_1(x) - \alpha f_0(x) \right) \cdot w_0(x)\,  dx + \int_{B} \left( f_1(x) - \alpha f_0(x) \right) \cdot w_0(x) \, dx + \int_{C} \left( f_1(x) - \alpha f_0(x) \right) \cdot w_0(x) \, dx \\
	  &= \int_{B} \left( f_1(x) - \alpha f_0(x) \right) \cdot w_0(x) \, dx + \int_{C} \left( f_1(x) - \alpha f_0(x) \right) \cdot w_0(x) \, dx
\end{align*}
The fourth equality holds because $A,B,C,D$ is a partition of $\mathbb{R}$ and because for every $x\in D: \, f_0(x)=f_1(x)=0$. The last equality holds because $f_1(x) = \alpha f_0(x)$ almost everywhere in $A$. Notice that both integrals in the last addition are over non-negative values almost everywhere. Therefore, they are both non-negative. Since their sum is $0$, we get that 
\begin{align*}
	0 = \int_{C} \left( f_1(x) - \alpha f_0(x) \right) \cdot w_0(x) \, dx = (1-\alpha) \int_{C} f_0(x)\cdot w_0(x) \, dx 
\end{align*}
The equality holds because $f_0(x) = f_1(x)$ almost everywhere in $C$.

Similarly, using the tight constraint $\alpha y_{1,1}=y_{0,1}$, we get
\begin{equation*}
	0 = (1-\alpha) \int_{C} f_0(x)\cdot w_1(x) \, dx
\end{equation*}
Summing the last two equations together, we get
\begin{equation*}
	0 = (1-\alpha) \int_{C} f_0(x)\cdot (w_0(x) + w_1(x)) \, dx = (1-\alpha) \int_{C} f_0(x) \, dx ,
\end{equation*}
and we conclude
\begin{equation}
	\int_{C} f_0(x) \, dx = \int_{C} f_1(x) \, dx = \int_{C} f_2(x) \, dx = 0,
\label{eq:cIsInsignificant}
\end{equation}

This will be sufficient for us to show that the induced mechanism $Y$ is infeasible in the same way shown for the discrete case. We will estimate the entries of the last row in $Y$ and will show they sum up to a value less than $1$. Calculate $ y_{2,0} - y_{1,0}$ :
\begin{align*}
	y_{2,0} - y_{1,0} &= \int_{\mathbb{R}} (f_2(x)-f_1(x))w_0(x) \,dx \\
	& = \int_A (f_2(x)-f_1(x))w_0(x) \,dx + \int_B (f_2(x)-f_1(x))w_0(x) \,dx + \int_C (f_2(x)-f_1(x))w_0(x) \,dx \\
	& = \int_B (\alpha - 1) f_1(x) w_0(x) \, dx + \int_C (1-\alpha) f_2(x) w_0(x) \, dx \\
	& = (\alpha - 1) \int_B  f_1(x) w_0(x) \, dx \\
	& \leq 0
\end{align*}
The second equality holds because $A,B,C,D$ is a partition of $\mathbb{R}$ and because $f_1(x)=f_2(x)=0$ everywhere in $D$. The third equality holds because $f_2(x)=f_1(x)$ almost everywhere in $A$, \mbox{$f_2(x) =\alpha f_1(x)$} almost everywhere in $B$ and $f_1(x) = \alpha f_2(x)$ almost everywhere in $C$. The last equality holds because of Equation~(\ref{eq:cIsInsignificant}). The inequality holds because $\alpha < 1$ and $f_1(x),w_0(x) \geq 0$.

We get that $y_{2,0} \leq y_{1,0} = \alpha / (1+\alpha)$. Similarly, by calculating $y_{2,1} - y_{0,1}$ we get that $y_{2,1} \leq y_{0,1} = \alpha / (1+\alpha)$.
Since $\alpha < 1$, these two entries in the last row of $Y$ do not sum up to 1, a contradiction to the mechanism $Y$ being row-stochastic.

\end{proof}
}

\remove{
\_section{Generalizing the Impossibility Result of Universally Optimal Mechanisms for Sum Queries}
\label{app:sum_generalizations}
\_subsection{Generalizing the Impossibility Result to $m > 2$} \_label{sec:m_gt_2}

Consider the case where the database consists of one record, and the possible values in this record are $0$ to $m$. Let 
$$X = \frac{1}{1+m \alpha} \cdot 
	\begin{smallbmatrix}
		1 & \alpha & \alpha & \cdots & \alpha \\
		\alpha & 1 &\alpha  & \cdots & \alpha \\
		\alpha & \alpha & 1 & \cdots & \alpha \\
		\vdots & \vdots & & \ddots & \vdots \\
		\alpha & \alpha &\alpha &\cdots &1
	\end{smallbmatrix};\;
	Y = \frac{1}{1+(m-1) \alpha} \cdot 
	\begin{smallbmatrix}
		1 & \alpha & \alpha & \cdots & \alpha & 0\\
		\alpha & 1 &\alpha  & \cdots & \alpha & 0\\
		\alpha & \alpha & 1 & \cdots & \alpha & 0\\
		\vdots & \vdots & & \ddots & \vdots \\
		\alpha & \alpha &\alpha &\cdots & 1 & 0\\
		q_1 & q_2 & q_3 & \cdots & q_{m} & 0
	\end{smallbmatrix},$$
where $\alpha \leq q_i \leq 1$ and $\sum_{i=1}^m q_i =1+(m-1)\alpha$.
Similar arguments to those used for the case $m=2$ show that $X$ is the unique optimal mechanism for an information consumer with loss function $\ell_{bin}$ and prior $p_0=p_1=\cdots = p_m = \frac{1}{m+1}$ in the Bayesian utility model and for an information consumer with support $S=\{0,1,\ldots ,m\}$ in the risk-averse utility model. Also, mechanisms of the form $Y$ are the only optimal mechanisms for the information consumers with loss function $\ell_{bin}$ and prior $p_0=p_1=\cdots = p_{m-1} = \frac{1}{m},\, p_m=0$ in the Bayesian model, and for an information consumer with support $S=\{0,1,\ldots ,m-1\}$ in the risk-averse model. Using similar arguments to those in the proof of Claim \ref{clm:impossibility_sumquery}, it follows that $X$ and $Y$ are not derivable from one single mechanism.

\_subsection{Generalizing the Impossibility Result to $n>1$}
\label{sec:n_gt_1}

Now consider the case where the number of records in the database is larger than $1$. We first deny the possibility of an oblivious universally optimal mechanism. Consider two consumers with loss function $\ell_{bin}$. The first consumer believes that the result of the sum query is bounded by $m$ (in the Bayesian case, the consumer holds a uniform prior over $\{0,\ldots,m\}$). No optimal mechanism for this consumer returns values larger than $m$, so in the mechanism matrix the columns from $m+1$ on  contain zeros. Refer to some optimal mechanism for this consumer as $X'$. Ignoring rows and columns of $X'$ that correspond to values greater than $m$,  the remaining entries exactly form the mechanism $X$ of the previous subsection. (Observe that such an extension of mechanism $X$ is indeed feasible, as any row which pertains to a value greater than $m$ can be identical to the row which pertains to the value $m$, and so the privacy constraints hold. Such a mechanism is also optimal, as the utility is a function of only the rows $\{0,1,\ldots,m\}$, due to the consumer's prior, so we cannot achieve a better utility than the utility gained by mechanism $X$). The second consumer believes that the query result cannot be larger than $m-1$  (in the Bayesian case, the consumer holds a uniform prior over $\{0,\ldots,m-1\}$). Refer to some optimal mechanism for this consumer as $Y'$. A similar argument shows, that ignoring rows and columns that pertains to values greater than $m$, the remaining entries match the mechanism $Y$ of the previous subsection. 
Assume for a contradiction that $X'$ and $Y'$ are both derivable from some mechanism $U'$. Therefore there exist remappings $T,S$ such that $X' = U'T$ and $Y'=U'S$. Let $U$ be the mechanism $U'$ reduced to only the inputs $\{0,1,\ldots,m\}$. Reducing $U'$ to $U$, we get that $X=UT$ and $Y=US$. According to the previous subsection these two mechanisms cannot be derived from a single oblivious mechanism, due to the same arguments in the proof of Claim \ref{clm:impossibility_sumquery}. Thus, a contradiction.

Now suppose for a contradiction that both the mechanisms are derived from a single non-oblivious mechanism $U^*$. This means that $U^*$'s input space is different databases rather than query results. Suppose there is a remapping $T$ such that $X^*=U^* T$. This means that the indexes of $X^*$'s rows should be databases as well. We assume $X^*$ is oblivious (if we let the consumers choose non-oblivious optimal mechanisms then there is no chance of universally optimal mechanisms even for count queries \cite{GRS09}). Therefore, applying $U^*$ on two databases with the same query result and then applying T on $U^*$'s output,
yields identical rows in $X^*$ (which is described as a single row in the oblivious matrix $X$ above). Note that although $X^*$'s input and output spaces are discrete (and so we can refer to $X^*$ as a matrix), we assume nothing on $U^*$'s outputs and $T$'s inputs.
Reducing $U^*$ to an input space of only $m+1$ databases with different query results and applying the remapping $T$ on this reduced mechanism's output, yields mechanism $X$ completely. 
Similarly, applying some remapping $S$ on the same reduced mechanism yields mechanism $Y$. 
Now reduce $U^*$ to inputs which are the databases $(0,0,\ldots,0,q)$ where $q$ is any possible record value. Refer to this mechanism as $U$. According to the assumptions, we get that $X=UT,Y=US$. Also note that every two possible inputs of $U$ are neighboring databases, and so $U$ must satisfy privacy constraints as any oblivious mechanism. Therefore, we get a simple reduction to the case of an oblivious mechanism $U$, and the same impossibility result applies also to the case of non-oblivious universal mechanism. Thus, we conclude Theorem \ref{thm:sum_queries}

}

\remove{\_section{Impossibility of Universally Optimal Mechanisms for Other Query Types}
\label{app:histograms}

\remove{
\_subsection{Histogram Queries}
We can relate to count queries as a histogram of a partition of the records into two categories: those which satisfy the predicate, and those which do not. The exact result of a count query is actually the result of such a histogram. Consider a query for a histogram on the database records which partitions them into three categories or more. We now discuss the details of the proof for Theorem \ref{thm:histograms}.

Once again, consider first the case where there is only one record in the database, and the query is for a histogram which partitions the possible records into three categories. The only possible results for such a query are $\left(1,0,0\right)$, $\left(0,1,0\right)$ and $\left(0,0,1\right)$. Notice that all these histograms result from neighboring databases. Now consider information consumers who are described with the $\ell_{bin}$ loss function or with the $\ell_1$ loss function. (In the case of a histogram over one record, they are the same, resulting with $0$ if the exact result and the output are identical and with a constant if not). Refer to the first possible result as $0$, the second possible result as $1$, and the third possible result as $2$. Notice now that we have exactly the same constraints for valid mechanisms as we had for the sum query with just one record. Also, the utility expression for each of the consumers is the same. The problem of universally optimizing the utility for all $\ell_{bin}$ information consumers (or $\ell_1$ consumers) is now reduced to the same problem for sum queries. According to Section \ref{sec:sum_queries}, universally optimizing the utility for all such consumers is impossible, and so it is impossible to construct a mechanism for this specific case as well.

We now generalize this result for histograms over larger databases and partitions of any number of categories larger than $2$. First, consider the case of querying one record for a histogram of $c \geq 3$ number of categories. This can easily be reduced to problems we have already answered negatively. One way is to notice that as in the case where $c=3$ (in which we reduced this problem to the problem of sum queries where the records' values bound is $m=2$), larger values of $c$ can easily be reduced to sum queries with larger bounds on the records' values $m$. For every number of $c$ partitions, there are exactly $c$ possible results for the histogram over one record. They are all the results of neighboring databases. Refer to these results as $0,1,2,\ldots,c-1$. Again, this is exactly like constructing a universally optimal mechanism for sum queries over one record, in which the bound on its values is $m=c-1$. This is impossible as was shown in Appendix \ref{app:sum_generalizations}. Another way to be convinced is to refer to the partitions as $A_1, A_2, \ldots, A_c$. Now consider only consumers whose utility functions are calculated by the number of records in $A1, A2, A_3 \cup A_4 \cup \ldots \cup A_c$. These are still monotone loss functions. This reduces the current problem to the problem of histograms over a partition of only $3$ categories, to which we already proved negative results.

We now generalize this result further to any size of databases. The same argument that was applied in Appendix \ref{app:sum_generalizations} for sum queries, is applied here as well (even for the case of non-oblivious universal mechanisms). Consider only consumers with a prior such that all records except perhaps one fall into one specific category of the histogram. Querying for a histogram on such a database reduces to the result of the same histogram over one record only. Even if we consider only these consumers, we know that no one single mechanism can optimize their utilities over all possible mechanisms. Since there is no such mechanism that optimizes these consumers' utilities, there is obviously no mechanism that yields optimized results for all possible consumers. Therefore, even for larger databases, there is no universally optimal private mechanism for histogram queries.
}

\remove{
\_subsection{Bundles of Count Queries}
We now pay attention to another generalization of single count queries. A possible query might be a bundle of simple (non trivial) count queries to which the mechanism answers simultaneously, as they were just one query. We discuss the impossibility of a universally optimal mechanism for such bundles, as stated in Theorem \ref{thm:bundles}.

Note that the loss of each information consumer is a function on some query results and outputs at once. A consumer's disutility is not necessarily a sum of the losses for all separate basic queries. Otherwise the mechanism uses the universally optimal geometric mechanism for count queries, and replies each query independently and separately. In our scenario, consumers might have auxiliary knowledge regarding some query results at once, and the loss function that describes them is also a function of some outputs at once. For instance, consider an $\ell_{bin}$ consumer and two simultaneous non-trivial count queries. Such a consumer pays no loss if both the results he uses are correct, but he will pay a single constant loss if any of the two results he uses is incorrect, and also if both of them are incorrect. This is different from the case of a consumer who queries two different mechanisms for two different count queries. 

Such a generalization of count queries proves to be no different than the other intuitive generalizations we have already discussed. Note that two simultaneous non-trivial different predicates actually partition the records domain into $4$ categories: those which satisfy both predicates, those which satisfy none of them, and those which satisfy just the first or just the second. If one of the predicates is a subset of the other, then the predicates partition the domain into only $3$ categories. In each case, the reply for such two count queries is the same as the reply for a histogram query with $3$ or $4$ categories. We have already shown that no universally optimal mechanism exists for histogram queries (which are not identical to a single count query). Therefore, no universally optimal mechanism exists for a bundle of two count queries, as well. The same argument shows that no universally optimal mechanism exists for a bundle of any number (larger than 1) of different non-trivial count queries.
}
}

\end{document}